\documentclass[12pt,a4paper]{article}%
\usepackage{adjustbox}
\usepackage{comment}
\usepackage{multicol}
\usepackage{multirow}
\usepackage{lmodern}
\usepackage[T1]{fontenc}
\usepackage[utf8]{inputenc}
\usepackage{parskip}
\usepackage{amsfonts}
\usepackage{xfrac}
\usepackage{booktabs}
\usepackage{pifont}
\usepackage{amsfonts,amssymb,amsthm,algorithm}
\usepackage{amsmath}
\usepackage{mathtools}
\usepackage{enumitem}
\usepackage[a4paper,margin=1in]{geometry}
\usepackage[figuresright]{rotating}
\usepackage[UKenglish]{isodate}
\usepackage{cases}
\DeclarePairedDelimiter\floor{\lfloor}{\rfloor}
\usepackage[authoryear]{natbib}
\usepackage[font=footnotesize,labelfont=sf]{caption}
\usepackage{setspace, tabularx}
\usepackage[page]{appendix} 
\usepackage{booktabs}
\usepackage[flushleft]{threeparttable}
\usepackage[table,dvipsnames]{xcolor}
\usepackage{accents}

\usepackage{array}
\newcolumntype{L}{>{$}l<{$}}
\usepackage{abstract}

\usepackage[colorlinks=true,
            urlcolor=blue,
            citecolor=blue,
            linkcolor=blue,
            bookmarks,
            bookmarksopen=true,
            bookmarksnumbered=true,
            pdfstartview={FitH}
]{hyperref}
 \newcommand{\Ex}{\textnormal{\textsf{E}}}

\newcommand{\cov}{\textnormal{\textsf{cov}}}

\newcommand{\eps}{\varepsilon}
\newcommand{\bI}{\textbf{1}}
\DeclareMathOperator*{\llog}{\textnormal{\textsf{log}}}

\DeclareMathOperator*{\argmax}{\textnormal{\textsf{arg\,max}}}
\DeclareMathOperator*{\llim}{\textnormal{\textsf{lim}}}
\DeclareMathOperator*{\llimsup}{\textnormal{\textsf{lim\,sup}}}
\DeclareMathOperator*{\ssup}{\textnormal{\textsf{sup}}}
\DeclareMathOperator*{\iinf}{\textnormal{\textsf{inf}}}
\DeclareMathOperator*{\mmax}{\textnormal{\textsf{max}}}

\newtheorem{definition}{Definition}

\newtheorem{assumption}{Assumption}
\newtheorem{lemma}{Lemma}
\newtheorem{proposition}{Proposition}

\usepackage[noblocks]{authblk}
\usepackage{graphicx}
\usepackage{tikz}
\usepackage{pgfplots}
    \usetikzlibrary{
        pgfplots.dateplot,
    }
\usepackage{subcaption}

\usepackage{amssymb}

\usepackage{graphicx,psfrag,epsf}
\usepackage{natbib} 
\usepackage{amsfonts,amssymb,amsthm,algorithm}
\usepackage{bm}    
\usepackage{threeparttable}   
\usepackage{tcolorbox}    
\usepackage{color,float}    
\usepackage{multirow}   
\usepackage{caption}
\usepackage{subcaption}
\usepackage{hyperref}
\usepackage{adjustbox}
\usepackage{setspace}
\def \d {\mathrm{d}}

\def\1{1\!{\rm l}}

\usepackage{pgfplots}
\pgfplotsset{compat=1.18}
\usepgfplotslibrary{groupplots}
\usepackage{pgfplotstable}
\usepackage[strings]{underscore} 

\begin{document}

\title{
Local Gaussian copula inference with structural breaks: testing dependence predictability
}

\author[a]{Alexander Mayer\thanks{Corresponding author. \emph{E-mail address}: \href{mailto:a.s.mayer@ese.eur.nl}{a.s.mayer@ese.eur.nl}. We thank Oliver Linton and seminar participants at the following seminars for valuable comments:   Universitat de les Illes Balears, Ca' Foscari Venice, and La Sapienza Rome.
}}
\author[b]{Tatsushi Oka}
\author[c]{Dominik Wied}
\affil[a]{\small {\it Erasmus University Rotterdam \& Tinbergen Institute, The Netherlands}}
\affil[b]{\small {\it Keio University, Japan}}
\affil[c]{\small {\it University of Cologne, Germany}}
\date{\today}
\maketitle 
\thispagestyle{empty}

\begin{abstract}
 
\vspace{0.2in}
We propose a score test for dependence predictability in conditional copulas that is robust to temporal instabilities. Our semiparametric procedure accommodates flexible dynamics in the marginal processes and remains agnostic about the copula family by leveraging distributional regression techniques together with a local Gaussian representation of the copula link function. We derive the limiting distribution of our test statistic and propose a resampling scheme based on recent results for the moving block bootstrap of multi-stage estimators.  Monte Carlo simulations and an empirical application illustrate the finite-sample performance of our methods.\\

\noindent
\textbf{Keywords:} Copula, distribution regression, empirical process, structural breaks.
\end{abstract}

\newgeometry{}
\setcounter{page}{1}

\section{Introduction}\label{sec:intro}

Understanding the dynamics of dependence between economic and financial time series is crucial for assessing systemic risk, market connectedness, and the role of common factors (e.g. \citealp{DieboldYilmaz2014Connectedness,  BrownleesEngle2017SRISK}). In many applications, researchers are not only interested in the marginal dynamics of time series but also in how their \emph{joint distribution} evolves over time and whether this evolution is affected by observable state variables (see, e.g., \citealp{patton:06,cappiello2006asymmetric,adrian:16}).

For example, in our empirical application to systemic risk, we ask whether lagged market downturns predict state-dependent quantile dependence between a financial institution's return and the contemporaneous market return, with particular interest in whether market downturns amplify joint downside risk (see also \citealp{han:2016}). A large empirical literature documents that dependence among equity returns strengthens during market stress and is often asymmetric, with different comovements in downturns and in upturns (see, e.g., \citealp{longin2001extreme,ang2002asymmetric,forbes2002no}). Several econometric contributions develop procedures to test whether dependence is constant over time or to detect structural changes in dependence (e.g., \citealp{wied:2012,buecher:14,stark:22,bomw:24}). Our focus is complementary: rather than characterizing dependence through global high/low regimes, we ask whether negative market states predict changes in dependence in particular parts of the joint distribution and during particular time periods.

For illustration, Figure~\ref{fig:ellipse} shows three scatter plots of the empirical ranks of standardized returns (upon filtering out conditional mean and variance dynamics) for major U.S.\ financial institutions, $r_t$, and the (CRSP value-weighted) market return, $r_{m,t}$, based on daily data from January 2000 to December 2019. Similar to \citet{han:2016}, we consider three financial institutions: JP Morgan ({\sf JPM}), Morgan Stanley ({\sf MS}), and {\sf AIG} that represent Depositories, Broker-Dealers, and Insurers, respectively. The plots are stratified by the sign of the lagged market return $r_{m,t-1}$, distinguishing days following a negative versus a positive market move. Comovement appears stronger following negative market returns \(r_{m,t-1}<0\), suggesting that market stress may increase dependence between \(r_t\) and \(r_{m,t}\). This is confirmed by a higher Spearman’s rho in the down market regime than in the up market regime and also (lower panel) quantile dependence (\citealp{patton2013copula}) appears to be higher following a bearish market. Put differently, Figure~\ref{fig:ellipse} points to two questions that correspond directly to our testing framework: ($i$) whether lagged market states have predictive content for dependence, and ($ii$) whether any such predictive effect is stable over time or instead subject to structural change. Motivated by this pattern, the goal of the present paper is to develop formal tools to test whether such state dependence is statistically significant and to locate \emph{where} in the joint distribution it occurs (e.g., in lower versus upper quantiles). In addition, we allow the dependence response to vary over time and investigate whether it is stable or exhibits structural change.

\begin{figure}
\centering
\begin{tikzpicture}
\begin{groupplot}[
  group style={group size=3 by 2, horizontal sep=.25cm, vertical sep=.8cm},
  width=5.0cm, height=5.0cm,
  xmin=0, xmax=1,
  ymin=0, ymax=1,
  clip=true,
  ticklabel style={font=\scriptsize},
  title style={font=\small},
]
 
\nextgroupplot[title={\sf MS}, axis equal image]
\addplot[only marks, mark=*, mark size=0.20pt,
  mark options={orange, opacity=0.2}]
  table[x=U1,y=U2,col sep=tab]{scatterMSZ0.txt};
\addplot[only marks, mark=*, mark size=0.20pt,
  mark options={blue, opacity=0.2}]
  table[x=U1,y=U2,col sep=tab]{scatterMSZ1.txt};
\addplot[thick, blue]   table[x=U1,y=U2,col sep=tab]{ellipseMSZ1.txt};
\addplot[thick, orange] table[x=U1,y=U2,col sep=tab]{ellipseMSZ0.txt};

\node[anchor=north west, font=\scriptsize] at (rel axis cs:0.02,0.98)
{$\rho^+=\text{0.453}\;\;\rho^-=\text{0.523}$};

\nextgroupplot[title={\sf AIG}, yticklabels={}, ytick=\empty, axis equal image]
\addplot[only marks, mark=*, mark size=0.20pt,
  mark options={orange, opacity=0.2}]
  table[x=U1,y=U2,col sep=tab]{scatterAIGZ0.txt};
\addplot[only marks, mark=*, mark size=0.20pt,
  mark options={blue, opacity=0.2}]
  table[x=U1,y=U2,col sep=tab]{scatterAIGZ1.txt};
\addplot[thick, orange] table[x=U1,y=U2,col sep=tab]{ellipseAIGZ0.txt};
\addplot[thick, blue]   table[x=U1,y=U2,col sep=tab]{ellipseAIGZ1.txt};

\node[anchor=north west, font=\scriptsize] at (rel axis cs:0.02,0.98)
{$\rho^+=\text{0.571}\;\;\rho^-=\text{0.613}$};

\nextgroupplot[title={\sf JPM}, yticklabels={}, ytick=\empty, axis equal image]
\addplot[only marks, mark=*, mark size=0.20pt,
  mark options={orange, opacity=0.2}]
  table[x=U1,y=U2,col sep=tab]{scatterJPMZ0.txt};
\addplot[only marks, mark=*, mark size=0.20pt,
  mark options={blue, opacity=0.2}]
  table[x=U1,y=U2,col sep=tab]{scatterJPMZ1.txt};
\addplot[thick, orange] table[x=U1,y=U2,col sep=tab]{ellipseJPMZ0.txt};
\addplot[thick, blue]   table[x=U1,y=U2,col sep=tab]{ellipseJPMZ1.txt};

\node[anchor=north west, font=\scriptsize] at (rel axis cs:0.02,0.98)
{$\rho^+=\text{0.672}\;\;\rho^-=\text{0.718}$};

\nextgroupplot[
  xlabel={},
  ylabel={},
  xmin=0.10, xmax=0.90,
  ymin=0.2, ymax=.8,
  title={},
  legend pos=south east,
  legend style={font=\scriptsize, fill=white, draw=none},
]

\addplot[thick, orange] table[x=q, y=z0, col sep=tab]{lowerMS.txt};
\addplot[thick, blue]   table[x=q, y=z1, col sep=tab]{lowerMS.txt};

\addplot[thick, orange, dashed] table[x=q, y=z0, col sep=tab]{upperMS.txt};
\addplot[thick, blue, dashed]   table[x=q, y=z1, col sep=tab]{upperMS.txt};

\nextgroupplot[yticklabels={}, ytick=\empty,   xmin=0.10, xmax=0.90,  ymin=0.2, ymax=.8,]
\addplot[thick, orange] table[x=q, y=z0, col sep=tab]{lowerAIG.txt};
\addplot[thick, blue]   table[x=q, y=z1, col sep=tab]{lowerAIG.txt};
\addplot[thick, orange, dashed] table[x=q, y=z0, col sep=tab]{upperAIG.txt};
\addplot[thick, blue, dashed]   table[x=q, y=z1, col sep=tab]{upperAIG.txt};

 \nextgroupplot[yticklabels={}, ytick=\empty,   xmin=0.10, xmax=0.90,  ymin=0.2, ymax=.8,]
\addplot[thick, orange] table[x=q, y=z0, col sep=tab]{lowerJPM.txt};
\addplot[thick, blue]   table[x=q, y=z1, col sep=tab]{lowerJPM.txt};
\addplot[thick, orange, dashed] table[x=q, y=z0, col sep=tab]{upperJPM.txt};
\addplot[thick, blue, dashed]   table[x=q, y=z1, col sep=tab]{upperJPM.txt};

\end{groupplot}
\end{tikzpicture}

\caption{The plots are based on daily CRSP return data from January 2000 to December 2019. Empirical ranks of standardized returns of three financial companies ($r_t$) and the market return ($r_{m,t}$) with 68\% Gaussian ellipses and Spearman’s rho superimposed (top row) and conditional quantile dependence curves (bottom row). Blue points/curves correspond to observations with $r_{m,t-1}<0$ (bear, $-$) while orange points/curves correspond to observations with $r_{m,t-1}>0$ (bull, $+$). In the bottom row, solid lines show lower-tail dependence $C(q,q)/q$, $q \in [0.1,0.5]$, and dashed lines show upper-tail dependence $(1-2q+C(q,q))/(1-q)$, $q \in [0.5,0.9]$ (e.g. \citealp[Eqs. (14)-(15)]{patton2013copula}). }
\label{fig:ellipse}
\end{figure}
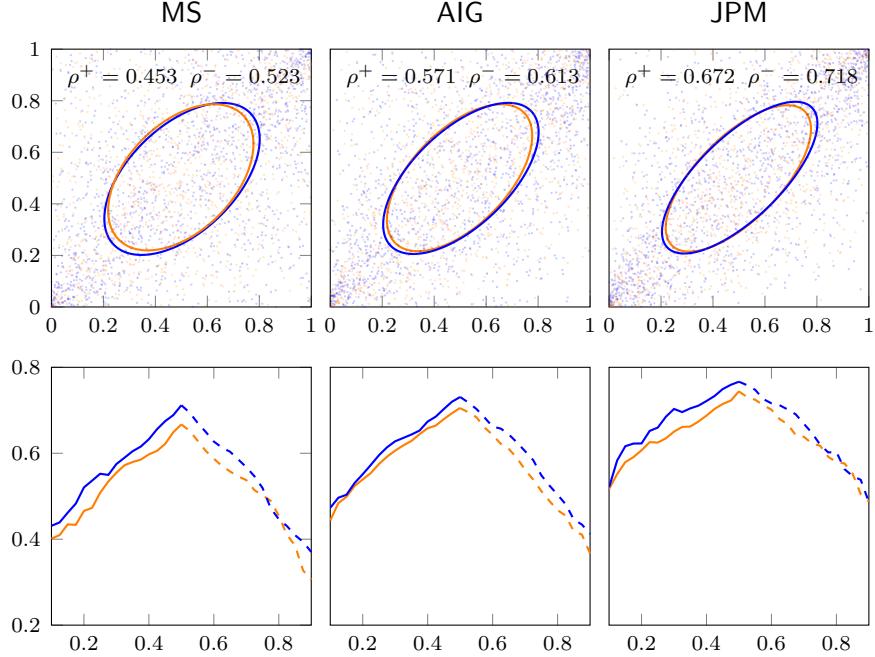

The problem studied in this paper is closely related to the Granger-causality literature, which has largely focused on the conditional mean or selected quantiles and therefore does not capture predictability in the full distributional dependence. This has motivated a growing literature on Granger causality in quantiles, including \cite{troster:18}, \citet{songetal:21}, and \citet{mwt:25}, as well as copula-based methods such as \citet{Bouezmarni:2012}, \citet{fuentes:2025}, \citet{lee:2014}, and \citet{jang:2024}. Moreover, there is a literature on how regressors influence copulas, including, among others, factor copulas \citep[e.g.][]{mw:2023}, vine copulas \citep[e.g.][]{jobst:2024}, and parametric copulas  \citep[e.g.][]{acar:13,gijbels:2017,gijbels:2021}.  We contribute to this line of work by testing whether a predetermined state vector carries predictive content for the conditional copula, after filtering the marginal dynamics. In this sense, our approach can be interpreted as a test for Granger-type predictability of dependence.

Our test is semiparametric and accommodates flexible AR-GARCH type dynamics in the marginal distributions, thereby allowing for empirically realistic features such as volatility clustering. Building on modern \emph{distribution regression} (DR) methods (see, e.g. \citealp{foresiperacchi:1995,cher:13,rw:13,wang:2023,wied:2024,spady:2025}), we employ a local Gaussian representation (see, e.g. \citealp{cherno24}), which enables us to remain fully agnostic about the true copula family. This strategy allows the dependence structure to be modelled locally at each individual point $(u,v)\in[0,1]^2$, rather than via global parametric restrictions (as, e.g., \citet{acar:13,gijbels:2017,gijbels:2021} do). 

A particular emphasis of our framework is on \emph{quantile dependence} (see, e.g., \citealp{patton:06,patton:12,patton2013copula,patton:13,oh2017modeling}). By examining the conditional copula along the main diagonal, we can study how co-movements, especially in the tails, vary with the state of the market. This perspective allows us, for example, to address whether the dependence between two return series changes with market conditions and whether such changes differ across quantiles and/or across time periods. Related ideas appear in, for instance, \citet{patton:06}, who examines time-varying quantile dependence in copula models. He shows that allowing copula parameters to vary over time is essential for capturing the asymmetric and state-dependent dependence observed in exchange rate returns, particularly the stronger lower-tail dependence during turbulent periods. His empirical results demonstrate that conditional copulas can reveal substantial time variation in co-movement that would remain hidden under static dependence models, motivating the need for flexible, state-dependent modelling frameworks like the one developed in this paper.

Our procedure is designed to be robust to structural breaks in the joint dependence structure in the spirit of \cite{sowell:96} and \citet{rossi:2005}. Using residuals from univariate AR-GARCH models removes marginal conditional mean/variance dynamics. A major contribution of this paper is thus a theory for time-series DR with estimated marginal processes. The cornerstone is a functional central limit theorem for a marked sequential empirical copula process with estimated pseudo-observations based on bracketing arguments from \citet{andrews1994introduction} and, in particular, the extensions to sequential processes by \citet{mohr2020weak} and \citet{scholze2024weak}. This result connects recent distributional regression ideas to the empirical copula process literature (e.g., \citealp{buecher:14, bucher2016dependent,  nasri2022change}) and provides the basis for our break-robust testing framework.  Finally, we suggest a moving block bootstrap suitable for multi-stage estimators \citep{gonccalves2023}.

The remainder of this paper is organized as follows: Section~\ref{sec:model} introduces the model and the testing problem. The test statistics and the bootstrap are described in Section~\ref{sec:teststat}, the asymptotic properties of which are derived in Section~\ref{sec:asy}.  Finite sample evidence in the form of a Monte Carlo experiment and an estimation exercise using financial data are presented in Sections~\ref{sec:MC} and  \ref{seq:emp}, respectively, before Section~\ref{sec:con} concludes. All proofs are delegated to the appendix.

\textbf{Notation.} For any random variable $X$ in $L^p$, let $\|X\|_p \coloneqq \Ex[|X|^p]^{1/p}$; for any vector $x = (x_1,\dots,x_m)^\top$, let $|x|=\sqrt{x^\top x}$ and $|x|_\infty = \mmax\{|x_1|,\dots,|x_m|\}$; for $u = (u_1,u_2) \in [0,1]^2$, let $u^{(1)} \coloneqq (u_1,1)$ and $u^{(2)} \coloneqq (1,u_2)$.

\section{Model and testing problem}\label{sec:model}

Consider the following location-scale specification
\begin{align}\label{eq:Ydgp}
Y_{j,t} = \mu_{j,t}(\lambda)+\sigma_{j,t}(\lambda)\eps_{j,t}, \qquad j \in \{1,2\},
\end{align}
where $\mu_{j,t}$ and $\sigma_{j,t}$ are known parametric functions up to $\lambda = \lambda_0$ that are ${\cal F}_{t-1}$-measurable. The innovations $\eps_{j,t}$ are {\it individually} independent of ${\cal F}_{t-1}$ for each $j \in \{1,2\}$, while $\eps_t = (\eps_{1,t},\eps_{2,t})^\top$ might depend {\it jointly}  on ${\cal F}_{t-1}$. This framework rules out time-varying parameters beyond the chosen parametric marginal model, but incorporates the important class of ARMA-GARCH models, similarly to what e.g. \citet{chen2006,chen2006estimation} or \citet{patton:13, oh2017modeling} consider. 

We assume that the marginal cdfs $Y_{j,t} \mid {\cal F}_{t-1} \sim {\sf H}_{j,t}$ are continuous. Following \cite{patton:06}, we can then decompose the conditional joint distribution  $Y_t  \mid {\cal F}_{t-1} \sim {\sf H}_t$, $Y_t \coloneqq (Y_{1,t},Y_{2,t})^\top$, in terms of a unique copula ${\sf C}_t$, i.e.
\[
{\sf H}_t(x_1,x_2) = {\sf C}_t({\sf H}_{1,t}(x_1),{\sf H}_{2,t}(x_2)).
\]
Due to the structure of Eq.~\eqref{eq:Ydgp}, it is known that we can express the preceding copula in terms of the innovation ranks:
\begin{align*}
    {\sf C}_t(u_1,u_2) = \, & {\mathbb P}\{U_{1,t} \leq u_1,U_{2,t} \leq u_2 \mid {\cal F}_{t-1}\},
\end{align*}
with
\begin{align*}
    U_{j,t} \coloneqq {\sf F}_j(\eps_{j,t}), \quad {\sf F}_j(x) = {\mathbb P}\{\eps_{j,t} \leq x\}, \quad j \in \{1,2\}.
\end{align*}

Moreover, upon filtering out the effect of ${\cal F}_{t-1}$ on $Y_t$,  assume furthermore that for a $k \times 1$  ${\cal F}_{t-1}$-measurable vector $Z_t \coloneqq (Z_{1,t},\dots,Z_{k,t})^\top$, the remaining dependence-relevant information in ${\cal F}_{t-1}$ is summarized by $Z_t$:

\begin{assumption}\label{ass:copZ}
    ${\mathbb P}\{U_{1,t} \leq u_1,U_{2,t} \leq u_2 \mid {\cal F}_{t-1}\}  = {\mathbb P}\{U_{1,t} \leq u_1,U_{2,t} \leq u_2 \mid Z_t\}$ for all $u = (u_1,u_2) \in [0,1]^2$.
\end{assumption}

Assumption~\ref{ass:copZ} is a maintained assumption that defines the information set with respect to which we assess predictability of the conditional copula. Although restrictive, Assumption~\ref{ass:copZ} is not uncommon in the literature and less restrictive than, for example, assuming that the copula is invariant to the conditioning information, i.e., ${\sf C}_t(\cdot\mid{\cal F}_{t-1})={\sf C}(\cdot)$ (see, e.g., \citealp{patton:13, oh2017modeling}). In contrast, following \cite{mw:2023} we allow dependence dynamics through a predetermined state vector $Z_t$. Extending our limit theory beyond such a state reduction would be technically non-trivial as discussed below.  Therefore, dependence non-predictability in copulas corresponds to whether the copula ${\sf C}_t(u \mid Z_t)$ varies with $Z_t$; i.e. we test whether $Z_t$ has predictive content for the conditional dependence between $Y_{1,t}$ and $Y_{2,t}$. In doing so, we also want to be robust to temporal instabilities. 

Our approach is based on the following (conditional) Gaussian representation (see \cite{kolev2006copulas} and \cite{cherno24} as well as the references therein): Let $\Phi_2(y_1,y_2)$ be the bivariate normal cdf. Then, for each $(u,z)$, there exists a function $\rho_t(u;z) \in [-1,1]$ such that
\begin{align}\label{eq:localgauss}
{\sf C}_t(u \mid z) = \Phi_2( \Phi^{-1}(u_1), \Phi^{-1}(u_2); \rho_t(u; z)) \eqqcolon C(u; \rho_t(u;z)).
\end{align}
Note that the local Gaussian representation is a pointwise reparametrization and thus not a parametric assumption. However, in order to make \eqref{eq:localgauss} operational, we impose a parametric structure on the local dependence parameter $\rho_t(u;z)$ using the Fisher transform:

\begin{assumption}\label{ass:rhoZ} Define  $$\varrho(u,\theta_t; z) \coloneqq {\sf tanh}( \alpha(u)+\beta_t(u)^\top z), \qquad \theta_t(u) \coloneqq (\alpha(u), \beta_t(u)^\top)^\top.$$ For each $u \in {\cal U} \subset (0,1)^2$ and $t \in {\mathbb N}$, there exists a $\theta_{0,t}(u)$ (potentially depending on $t$) such that Eq.~\eqref{eq:localgauss} holds with
\[
\rho_t(u;z) = \varrho(u,\theta_{0,t}; z).
\]
\end{assumption}

For a subset of interest ${\cal U} \subset (0,1)^2$, the null hypothesis $H_{0} = H_{0,1} \cap H_{0,2}$ can thus be rephrased as
\begin{align}\label{eq:H0}
H_{0,1}: \beta_{0,t}(u) = \beta_0(u)\; \forall (u,t) \in {\cal U} \times {\mathbb N} \quad H_{0,2}: \beta_{0}(u) = 0 \quad \forall u \in {\cal U}.
\end{align}
Under $H_0$, the restricted population parameter obtains as $\theta_0 = (\alpha_0(u),0)^\top$ so that, in view of Eq.~\eqref{eq:localgauss}, we get $$\varrho(u,\alpha_0) \coloneqq \varrho(u,\theta_0;z) ,\; C(u, \alpha) \coloneqq C(u; \varrho(u,\alpha)), \,\text{ and }\; C(u, \alpha_0) =  {\sf C}_t(u \mid z) = {\sf C}(u).$$ The testing problem is thus similar in nature to \cite{sowell:96} and \cite{rossi:2005} (see also \citealp{mwt:25}).

\section{Test statistic and bootstrap}\label{sec:teststat}

Since the population ranks $U_t = (U_{1,t},U_{2,t})^\top$ are usually unobserved, we will replace them with sequentially computed sample counterparts. In particular, define  as
$$\hat U_{j,t,s} \coloneqq \hat F_{j,s}(\hat \eps_{j,t}), \qquad j \in \{1,2\}, \quad s \in [0,1],$$
the ranks of the residuals $\hat\eps_{j,t} \coloneqq \eps_{j,t}(\hat\lambda)$, $\eps_{j,t}(\lambda) = (Y_{j,t}-\mu_{j,t}(\lambda))/\sigma_{j,t}(\lambda)$, calculated using the sequential empirical distribution function $\hat F_{j,s}(x) \coloneqq \hat F_{j,s}(x;\hat\lambda)$, with
\begin{align*}
\hat F_{j,s}(x;\lambda) \coloneqq \frac1{\floor{ns}}\sum_{t=1}^{\floor{sn}}\bI\{\eps_{j,t}(\lambda) \leq x\},
\end{align*}
using observations up to $\floor{s n}$, with $n$ denoting the length of the sample; the convention $\hat F_{j,s}(x;\lambda) = 0$ is used at $s = 0$. These pseudo observations $\hat U_{t,s} = (\hat U_{1,t,s},\hat U_{2,t,s})^\top$ involve two sources of first-step sampling uncertainty due to (1) estimation of $\lambda$ and (2) estimation of marginals ${\sf F}_j(\cdot)$ that will be taken into account when deriving the analytical properties of our test. 

Now, at $u = (u_1,u_2)^\top$, the log-likelihood is given by 
$$\hat\ell(u,s,\theta) \coloneqq \frac1{n}\sum_{t=1}^{\floor{sn}} \hat\ell_{t}(u,s,\theta;Z_t),$$ where
\begin{equation}\label{eq:likcont}
\begin{split}
\hat\ell_{t}(u,s,\theta;z) \coloneqq & \, \bI\{ \hat U_{1,t,s}\leq u_1, \hat U_{2,t,s}\leq u_2\}\llog C_t(u,\theta;z)  \\
\,& + \bI\{ \hat U_{1,t,s}\leq u_1, \hat U_{2,t,s}> u_2\}\llog (u_1 - C_t(u,\theta;z))\\
\,& + \bI\{ \hat U_{1,t,s}> u_1, \hat U_{2,t,s}\leq u_2\}\llog (u_2 -C_t(u,\theta;z))\\
\,& + \bI\{ \hat U_{1,t,s}> u_1, \hat U_{2,t,s}> u_2\}\llog (1-u_1-u_2 + C_t(u,\theta;z)),
    \end{split}
\end{equation}
with $C_t(u,\theta;z) \coloneqq C(u; \varrho_t(u,\theta; z))$. Note that, for a given $z$, $\hat\ell_{t}(u,s;\theta,z)$ is strictly concave in $C$, and, for a given $(u,z)$, $\theta \mapsto C(u,\theta;z)$ is monotone in $\theta$. It is convenient to represent the likelihood more succinctly via
\[
\hat\ell_{t}(u,s,\theta;z)= \hat d_{t}(u,s)^\top \llog p(u;\varrho_t(u,\theta;z)) =  \sum_{j=1}^4 \hat d_{j,t}(u,s) \llog p_j(u;\varrho_t(u,\theta;z)),
\]
with
 \begin{align*}
    \hat d_{t}(u,s) = ( \hat d_{1,t}(u,s),\dots,  \hat d_{4,t}(u,s))^\top \coloneqq 
    \begin{bmatrix}
    \bI\{\hat {U}_{1,t,s} \leq u_{1}\}\\ \bI\{\hat{U}_{1,t,s} > u_{1}\} \end{bmatrix} 
    \otimes 
    \begin{bmatrix} 
    \bI\{  \hat U_{2,t,s} \le u_{2}\}\\ \bI\{\hat{U}_{2,t,s} > u_{2}\} \end{bmatrix}
\end{align*}
denoting the $4\times 1$ vector of empirical events and 
\begin{align*}
p(u;\varrho_t(u,\theta;z))  =\,& (p_1(u;\varrho_t(u,\theta;z)),\dots,p_4(u;\varrho_t(u,\theta;z)))^\top \\
\coloneqq \,&
 \begin{bmatrix} 
    C_t(u,\theta; z) \\ 
    u_{1}-C_t(u,\theta; z)  \\ 
    u_{2}-C_t(u,\theta; z) \\
    1-u_1-u_2+C_t(u,\theta; z)
    \end{bmatrix}
\end{align*}
represent its population counterparts.

Our test will be based on the sequential score evaluated at the restricted ($\beta = 0$), full-sample estimator (defined below) $\hat\theta \coloneqq (\hat\alpha,0)^\top$:
\begin{align}
    \sqrt{n}\nabla_\theta\hat\ell(u,s;\hat\theta) = \tau(u,\hat\alpha)\frac1{\sqrt{n}}\sum_{t=1}^{\floor{s n}} X_t \sum_{j=1}^4e_j \frac{\hat d_{j,t}(u,s)}{p_j(u,\hat\alpha)}, \quad X_t \coloneqq (1,Z_t^\top)^\top,
\end{align}
where $(e_1,\dots,e_4)^\top \coloneqq (1,-1,-1,1)^\top$ and  $$\tau(u,\alpha) \coloneqq (1-\varrho^2(u,\alpha))C_\varrho(u,\alpha), \quad C_\varrho(u,\alpha) \coloneqq \frac{\partial}{\partial \varrho} C(u,\varrho) \Big\vert_{\varrho = \varrho(u,\alpha)}.$$ In order to ascertain the validity of the joint null in Eq.~\eqref{eq:H0}, we consider a test statistic that explicitly takes potential deviations from the two sub-hypotheses into account:
\begin{align}\label{eq:teststat}
T \coloneqq \omega(\Delta_1+\Delta_2), \quad \omega: \ell^\infty({\cal U}) \longrightarrow {\mathbb R}_+, \quad {\cal U} \subset (0,1)^2,
\end{align}
where $\omega$ is a suitable {\it aggregation} function (e.g. $\ssup_u \Delta(u)$ or $\llog \int \exp(\Delta(u)/2) \d u$, \citealp{andpol:1994}), and
$$
\Delta_1(u) \coloneqq \ssup\limits_{s \in [0,1]} \sqrt{n}|\nabla_\beta\hat\ell(u,s;\hat\theta)-s \nabla_\beta\hat\ell(u,1;\hat\theta)|_\infty, \quad \Delta_2(u) \coloneqq \sqrt{n}|\nabla_\beta\hat\ell(u,1;\hat\theta)|_\infty,
$$
are the CUSUM detector and the LM statistic, respectively, tailored to detect deviations from $H_{0,1}$ (parameter stability) and $H_{0,2}$ (constant non-Granger causality). Note that the CUSUM detector $\Delta_1$ does not require any trimming. Intuitively, $\Delta_1$ has no power against constant deviations from the null, and $\Delta_2$ lacks power if Granger causality is unstable; hence, combining both detectors ensures non-trivial power in both cases, an idea that goes back to \cite{sowell:96} and \cite{rossi:2005} (see also \citet{mwt:25}). 

As it turns out, the test statistic $T$ has an asymptotic distribution which depends on nuisance parameters (see Section~\ref{sec:asy}), so we propose to use a moving block bootstrap  approximation for the critical values. For a given bootstrap draw $b \in \{1,\dots,B\}$, one proceeds as follows:

\begin{enumerate}
  \item Generate moving block bootstrap indices
  $\tau^{(b)}=(\tau^{(b)}_1,\dots,\tau^{(b)}_n)$ with block length $l \coloneqq l(n)$ with $l(n) \rightarrow \infty$ such that $l(n) = o(n^{1/2})$, and form the bootstrap sample
    $(Y^{b}_{1,t},Y^{b}_{2,t},Z^{b}_t)=(Y_{1,\tau^{(b)}_t},Y_{2,\tau^{(b)}_t},Z_{\tau^{(b)}_t})$, $t=1,\dots,n$.
    \item As explained in \citet{gonccalves2023}, re-estimate the marginal models on
  $Y^{b}_{1,t}$ and $Y^{b}_{2,t}$ (including possibly $Z^{b}_t$), obtain standardized residuals and the sequential ranks.

  \item For each $u\in {\cal U}$, compute the restricted estimator
  $\hat\theta^{\,b}=(\hat\alpha^{\,b}(u),0)^\top$ by maximizing the restricted likelihood at $u$, construct
  $\nabla_\beta \hat\ell^{\,b}_n(u,s;\hat\theta^{\,b})$, $s\in[0,1]$, and compute
\begin{align*}
\Delta_1^b(u) = \,& \sqrt{n}\ssup\limits_{s\in[0,1]}|\nabla_\beta\ell_n^b(u,s,\hat\theta^b)-s\nabla_\beta\ell_n^b(u,1,\hat\theta^b)|_\infty \\
\Delta_2^b(u) = \,& \sqrt{n}|\nabla_\beta\ell_n^b(u,1,\hat\theta^b)-\nabla_\beta\ell_n(u,1,\hat\theta)|_\infty
\end{align*}
Set $T^{b}=\omega(\Delta^{\,b}_1+\Delta^{\,b}_2)$.
\item Use the empirical $(1-\alpha)$-quantile of $\{T^{b}\}_{b=1}^B$
as bootstrap critical value for $T$, and compute $p$-values analogously (and similarly for tests based on only
$\Delta_1$ or $\Delta_2$ if reported).
\end{enumerate}

Because, as discussed below, the first-step estimation error enters the $\Delta_2$ component through an additional drift term that does not cancel in $\Delta_2(u)= \sqrt{n}|\nabla_\beta \hat\ell(u,1;\hat\theta)|_\infty$, we center the bootstrap analogue by subtracting the original sample score, i.e., $\Delta_2^{\,b}(u)=\sqrt{n}|\nabla_\beta \ell_n^{\,b}(u,1;\hat\theta^{\,b})-\nabla_\beta \ell_n(u,1;\hat\theta)|_\infty$,
so that the bootstrap replicates the correct score fluctuations. On the other hand, if one is only interested in break testing, one can use an {\sf IID} bootstrap similar to \cite{nasri2022change}, because, as explained below, the limit of $\Delta_1$ is nuisance parameter free.

\section{Asymptotic properties}\label{sec:asy}

\subsection{Further assumptions}
In order to derive the properties of $T$, we impose the following assumptions:

\begin{assumption}\label{ass:Csmooth} If $H_0$ is true, then $\partial_i{\sf C}(u)$ exists and is continuous on $\{u \in [0,1]^2: u_i \in (0,1)\}$ for $i \in \{1,2\}$.
\end{assumption}

\begin{assumption}\label{ass:Z}
$Z_t$ is a $k \times 1$ strictly stationary process, ${\cal F}_{t-1}$-measurable, with $\| |Z_1| \|_p < \infty$ for some $p>2(d+4)$
\end{assumption}

\begin{assumption}\label{ass:marginal} $(1)$ $\sqrt{n}(\hat\lambda-\lambda_0) \Rightarrow \Lambda$, where $\Ex[\Lambda]=0$. Moreover, define
$$
a_{i,1,t}(\lambda) \coloneqq \frac{\mu_{i,t}(\lambda)-\mu_{i,t}(\lambda_0)}{\sigma_{i,t}(\lambda_0)}, \quad a_{i,2,t}(\lambda) \coloneqq \frac{\sigma_{i,t}(\lambda)}{\sigma_{i,t}(\lambda_0)}.
$$
\begin{itemize}
    \item[$(2)$] There exists an $r \times 1$ vector $R_t$ of strictly stationary and ${\cal F}_{t-1}$-measurable processes such that for each $t \geq 1$:
$$
\dot{a}_{i,j}(R_t) \coloneqq \nabla_\lambda a_{i,j,t}(\lambda) \large\mid_{\lambda = \lambda_0}, \quad 1 \leq i,j \leq 2,
$$
where $\| \dot{a}_{i,j}(R_1)\|_q < \infty$ for $q>2$ such that $q/(q-2) \leq p$, with $p$ from Assumption \ref{ass:Z} and $(Z_t^\top,R_t^\top)^\top$ is an $\alpha$-mixing strictly stationary process with mixing coefficient $\alpha(\tau) \leq ab^{\tau}$ for some $a \in (0,\infty)$, $b \in (0,1)$. 
\item[$(3)$] Let $H \subset {\mathbb R}^d$ be compact. Then, for each $t \geq 1$ and some $q > 2$
$$
\mmax\limits_{1 \leq i,j, \leq 2} \ssup\limits_{h \in H} |\nabla^2_\lambda a_{i,j,t}(\lambda_0+h/\sqrt{n})| \leq \ddot{a}_{t}, \qquad \Ex[\ddot{a}_{1}^q] <\infty.
$$    
\end{itemize}
\end{assumption}

\begin{assumption}\label{ass:epsdist} $\ssup\limits_{x \in \mathbb{R}}|f_j(x)| < \infty$, $\ssup\limits_{x \in \mathbb{R}}|xf_j(x)| < \infty$, $f_j({\sf F}_j^{-1}(x))(1+{\sf F}_j^{-1}(x)) = o(1)$ as $x \rightarrow 0$ or $x \rightarrow 1$. Moreover, for $1 \leq i,j \leq 2$:
  $$
  \ssup\limits_{(x_1,x_2) \in {\mathbb R}^2}|\partial_i\partial_j{\sf F}(x_1,x_2)(1+|x_i|)(1+|x_j|)|<\infty.
  $$
\end{assumption}

The smoothness condition is due to \cite{segers2012asymptotics} and is needed to apply the extended continuous mapping theorem.  Assumptions~\ref{ass:Z} and \ref{ass:marginal} are needed in order to apply \cite{mohr2020weak}, which is an extension of the empirical process CLT of \cite{andrews1994introduction} to the sequential case with unbounded classes of functions. Note that this involves a trade-off between moments of $Z$ and dependence allowed. In our empirical application $Z_t$ is an indicator and thus surely bounded. If one were interested only in the full-sample statistic ($s=1$) and $\beta$-mixing can be assumed, then $\Ex[|Z_1|^{4+2\delta}]$, $\delta > 0$ suffices.  This can be shown using \cite{doukhan1995invariance} as in \cite{neumeyer2019copula}. Assumption~\ref{ass:epsdist} is similar to conditions imposed by \cite{neumeyer2019copula} (see also \citealp{mw:2023}) and rules out marginal distributions with bounded support.

\subsection{Asymptotics of the restricted estimator}

Because our test statistic will be based on the score (with respect to $\beta$) of the {\it un}restricted DR problem  evaluated at the {\it restricted} estimator, we first need to understand the properties of the latter. Under the restriction $\beta(u) = 0$, we have $\theta(u) = (\alpha(u),0)^\top$ so that the full-sample ($s=1$) likelihood contribution
$$\hat\ell_{t}(u,1,\alpha,0;z) \eqqcolon \hat\ell_{t}(u,\alpha)$$ is independent of $z$. Define the restricted estimator that imposes $\beta(u) = 0$:
\begin{equation}\label{eq:restricted}
 \hat\alpha(u) \coloneqq   \argmax\limits_{\alpha \in {\cal A}} \hat\ell(u,\alpha), \quad   \hat\ell(u;\alpha) \coloneqq \frac1{n}\sum_{t=1}^n \hat\ell_{t}(u,\alpha).
 \end{equation}
Note that, by strict concavity of $\hat\ell(u,\alpha)$ in $C(u,\alpha) \coloneqq C(u, \varrho(u,\alpha))$ and strict monotonicity of $\alpha \mapsto C(u,\alpha)$,  the maximizer $\hat\alpha(u)$ is, for each $u \in {\cal U}$, unique. 

On the other hand, the population value $\alpha_0(u)$ can be viewed as the unique maximizer of the population objective function
\begin{align}\label{eq:ellinf}
    \ell(u,\alpha) \coloneqq  \Ex[\ell_t(u,\alpha)], \quad \ell_t(u,\alpha) = \sum_{j=1}^4 d_{j,t}(u) \llog p_j(u,\alpha),
\end{align} 
where, in contrast to $\hat\ell_t(u,\alpha)$ in Eq.~\eqref{eq:restricted}, the log-likelihood contribution $\ell_t(u,\alpha)$ involves the true $d_{j,t}(u)$ that are based on the population ranks $U_{j,t} \coloneqq {\sf F}_j(\eps_{j,t})$. If $H_0$ is true, then
${\mathbb P}\{d_{j,t}(u)=1\}=p_j(u,\alpha_0(u)) \eqqcolon p_j(u)$ so that the Kullback–Leibler divergence satisfies
\begin{align}
 \ell(u,\alpha_0(u))-\ell(u,\alpha) = \sum_{j=1}^4 p_j(u) \llog\frac{p_j(u)}{p_j(u,\alpha)} \geq 0,
\end{align}
 with equality if, and only if, $p_j(u,\alpha) = p_j(u)$. Since $\alpha(u) \mapsto C(u, \alpha)$ is strictly increasing 
 $$
 \alpha_0(u) \coloneqq \argmax\limits_{\alpha \in {\cal A}} \ell(u,\alpha)
 $$
is the unique maximizer for each $u \in {\cal U}$. Because $(u,\alpha) \mapsto \ell(u,\alpha)$ is continuous on a compact set ${\cal U} \times {\cal A}$ we obtain uniform separation:  
\begin{align}\label{eq:separation}    
\forall \eta>0:\;\iinf_{u \in {\cal U}} (\ell(u,\alpha_0(u)) - \iinf_{\alpha \in {\cal A}: |\alpha-\alpha_0(u)|>\eta} \ell(u,\alpha)) >0. 
\end{align}
Put differently, if $H_0$ is true, then $\alpha_0(\cdot)$ is uniformly identified on ${\cal U}$ and there is hope that it can be estimated using a sample of estimated ranks $(\hat U_1,\dots,\hat U_n)$. In order to show that this is indeed the case, we assume that the parameter space is compact: 

 \begin{assumption} ${\cal U} \times {\cal A} \subset (0,1)^2 \times (-\infty,\infty)$ is compact.
\end{assumption}

 \begin{proposition}\label{prop:alpha} Suppose $H_0$ is true.
 
 \begin{itemize}
     \item[$(1)$] $\ssup\limits_{u \in {\cal U}}|\hat\alpha(u)-\alpha_0(u)| = o_p(1).$     
\item[$(2)$]  If, in addition, $\alpha_0 \in {\sf int}({\cal A})$, then for $\tau(u) \coloneqq \tau(u,\alpha_0)$ we get
 \[
 \sqrt{n}(\hat\alpha-\alpha_0)(\cdot) \Rightarrow \frac{{\mathbb{C}}(\cdot)}{\tau(\cdot)} \quad \text{in} \;\; \ell^\infty({\cal U}),
 \]
with $\mathbb{C}(u)\coloneqq {\mathbb B}_C(u,1)- \sum_{k \in \{1,2\}} \partial_k{\sf C}(u){\mathbb B}_C(u^{(k)},1)$, where ${\mathbb B}_C$ is a Gaussian process with $\cov[{\mathbb B}_C(u,s),{\mathbb B}_C(v,t)] = (s \wedge t)\sigma_C(u,v)$, $ \sigma_C(u,v) \coloneqq {\sf C}(u \wedge v)-{\sf C}(u){\sf C}(v)$.
\end{itemize}
 \end{proposition}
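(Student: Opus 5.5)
The argument exploits the special structure of the restricted problem. At $s=1$ the restricted objective depends on the data only through the empirical copula $\hat C_n(u) \coloneqq n^{-1}\sum_{t=1}^n \bI\{\hat U_{1,t,1}\le u_1,\hat U_{2,t,1}\le u_2\}$ and the empirical margins $\hat F_{j,1}(\hat F_{j,1}^{-1}(u_j))$. The latter equal $u_j$ up to $O(1/n)$ uniformly, because the ranks of the residuals are exactly uniform on $\{1/n,\dots,1\}$. Hence
\[
\hat\ell(u,\alpha) = \hat C_n\llog C(u,\alpha) + (u_1-\hat C_n)\llog(u_1-C(u,\alpha)) + (u_2-\hat C_n)\llog(u_2-C(u,\alpha)) + (1-u_1-u_2+\hat C_n)\llog(1-u_1-u_2+C(u,\alpha)) + O(n^{-1}),
\]
with the $O(n^{-1})$ term uniform over the compact ${\cal U}\times{\cal A}$, since the $p_j(u,\alpha)$ are bounded away from zero there. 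This is the multinomial log-likelihood with cell frequencies built from $\hat C_n(u)$, and its unique maximizer in $C$ is $C=\hat C_n(u)$ up to $O(1/n)$. Since $\alpha\mapsto C(u,\alpha)$ is strictly increasing, I would conclude that $\hat\alpha(u)=\psi_u(\hat C_n(u))+O_p(n^{-1})$, where $\psi_u$ is the inverse of $\alpha\mapsto C(u,\alpha)$. This holds whenever $\hat C_n(u)$ lies in the image $C(u,{\cal A})$; otherwise $\hat\alpha(u)$ sits on the boundary.

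\textbf{Part (1).} The key input is $\ssup_u|\hat C_n(u)-{\sf C}(u)|=o_p(1)$. It follows from the weak convergence of the sequential empirical copula process with estimated residuals, which is the paper's cornerstone FCLT and the result invoked for Part (2) below. Under $H_0$ we have ${\sf C}(u)=C(u,\alpha_0(u))$. Consistency then follows from the identification/separation property \eqref{eq:separation} together with the uniform convergence $\ssup_{u,\alpha}|\hat\ell(u,\alpha)-\ell(u,\alpha)|=o_p(1)$. That uniform convergence is immediate from the display above, since $\log p_j$ is bounded and continuous on the compact set. The standard argmax argument, applied uniformly in $u$, then gives $\ssup_u|\hat\alpha(u)-\alpha_0(u)|=o_p(1)$.

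\textbf{Part (2).} With $\alpha_0$ interior and consistency established, $\hat\alpha(u)$ is interior with probability tending to one, uniformly in $u$, because $\alpha_0$ is continuous on a compact set. Then $\hat\alpha(u)=\psi_u(\hat C_n(u))+O_p(n^{-1})$, and $\psi_u$ has derivative $1/\partial_\alpha C(u,\alpha)=1/\tau(u,\alpha)$, since $\partial_\alpha C=C_\varrho(1-\varrho^2)$ by the tanh link. This derivative is continuous in $(u,\alpha)$ and bounded away from zero on ${\cal U}\times{\cal A}$, because $C_\varrho(u,\varrho)>0$ for $u\in(0,1)^2$ (the bivariate normal density is positive). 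A uniform mean-value expansion then gives
\[
\sqrt n(\hat\alpha-\alpha_0)(u)=\frac{\sqrt n(\hat C_n-{\sf C})(u)}{\tau(u)}+o_p(1)
\]
uniformly. Equivalently, one can work from the first-order condition and note that the score is $\tau(u,\alpha)\sum_j e_j\hat d_j/p_j$, whose root is the same. The result then follows from the functional delta method / continuous mapping once we know that $\sqrt n(\hat C_n-{\sf C})\Rightarrow\mathbb C$ in $\ell^\infty({\cal U})$.

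\textbf{Main obstacle.} The real difficulty is this last convergence. It must hold with residual-based ranks and serial dependence, and the estimation of $\lambda$ must produce no additional drift in the limit. My plan is to write $\hat C_n(u)=\hat G_n(\hat F_{1}^{-1}(u_1),\hat F_2^{-1}(u_2))$ in residual space. I would expand the indicator $\bI\{\eps_{j,t}(\hat\lambda)\le x\}=\bI\{\eps_{j,t}\le x(1+\dots)+\dots\}$ using the derivatives $\dot a_{i,j}(R_t)$ from Assumption~\ref{ass:marginal}. This yields the uncorrected empirical process, evaluated under the mixing conditions via the bracketing FCLT of \citet{mohr2020weak}, plus a drift of the form $\sum_k x_k\partial_k{\sf F}\,\Ex[\dot a]\Lambda$. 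The key point is that, because each $\eps_{j,t}$ is individually independent of ${\cal F}_{t-1}$, the same drift appears in the marginal empirical processes. After composing with the empirical quantile functions (Hadamard differentiability, justified by Assumptions~\ref{ass:Csmooth} and \ref{ass:epsdist} as in \citet{segers2012asymptotics} and \citet{neumeyer2019copula}), the drift in the joint process and the drifts in the marginals cancel through the chain rule $\partial_k{\sf C}\cdot f_k={\partial_k{\sf F}}$. What remains is exactly ${\mathbb B}_C(u,1)-\sum_k\partial_k{\sf C}(u){\mathbb B}_C(u^{(k)},1)$. This cancellation, together with controlling the remainders uniformly near the boundary, is the step I expect to require the most care.
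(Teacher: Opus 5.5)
Your proposal is correct. For Part~(2) it takes a genuinely different route from the paper.

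\textbf{The paper's route.} The paper treats $\hat\alpha(u)$ as a generic extremum estimator.
\begin{itemize}
\item For Part~(1), it bounds $\hat\ell-\tilde\ell$ by $O_p(n^{-1/2})$ uniformly via Lemmas~1--2. It then applies a ULLN to the infeasible contributions $\ell_t$ (i.i.d.\ under $H_0$) and invokes the separation condition.
\item For Part~(2), it takes a mean-value expansion of the first-order condition and obtains the score limit $\tau(u)\varphi_u(\mathbb{C}(\cdot,1))$ from Lemma~2.
\item It also proves uniform convergence of the Hessian to $-\tau^2(u)\sum_j p_j(u)^{-1}$. This needs explicit bounds on $C_\varrho$ and $C_{\varrho\varrho}$ and a second ULLN.
\item The stated limit then uses $\mathbb{C}(u^{(k)},1)=0$, so that $\varphi_u(\mathbb{C}(\cdot,1))=\mathbb{C}(u)\sum_j p_j(u)^{-1}$.
\end{itemize}

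\textbf{Your route.} You exploit that the restricted problem is a four-cell multinomial with margins fixed up to $O(n^{-1})$ and a single free cell probability. The estimator is therefore, up to $O(n^{-1})$, the local Gaussian inversion $C(u,\hat\alpha(u))=\hat C_n(u)$. Both parts then reduce to two ingredients: uniform consistency and weak convergence of the residual empirical copula in $\ell^\infty({\cal U})$, and a delta method with $\psi_u'=1/\tau$.

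\textbf{Comparison.} Your route removes the Hessian step entirely. It also makes transparent why the limit is $\mathbb{C}(u)/\tau(u)$, and why first-step estimation of $\lambda$ is irrelevant: that invariance is inherited from the empirical copula. The paper's route does not depend on this saturated structure, and it reuses the score/mean-value machinery that is needed anyway for Proposition~2.

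\textbf{Your main obstacle.} The step you flag is exactly what Lemma~2 supplies. The first component of the limit of $\mathbb{H}_n(\cdot,\cdot;\hat\lambda)$ is $\mathbb{C}$, with no $\Lambda$-drift, since the term $s\Xi(u)\Lambda$ enters only $\mathbb{G}$. This is your cancellation through $\partial_k{\sf C}\,f_k=\partial_k{\sf F}$, so within the paper you can simply cite Lemma~2.

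\textbf{Minor points to tighten.}
\begin{itemize}
\item What must be bounded away from zero is $\tau=\partial_\alpha C$, not its inverse. Your positivity argument for $C_\varrho$ gives exactly this.
\item You use continuity of $\alpha_0$ to keep $\hat\alpha$ interior uniformly. It follows from $\alpha_0(u)=\psi_u({\sf C}(u))$ and the implicit function theorem; say so.
\item The covariance $\sigma_C$ of $\mathbb{B}_C$ relies on $U_t$ being i.i.d.\ under $H_0$. Note this explicitly.
\end{itemize}
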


The limiting distribution is governed by the same Gaussian process ${\mathbb C}$ that appears in \citet[Thm 3]{fermanian:2004} based on the Kiefer-Müller process ${\mathbb B}_C(u,1)$. Importantly, the well known (e.g. \citealp{gijbels2015estimation, remillard2017goodness}) invariance of the empirical copula process based on pseudo observations from location-scale models with respect to the first-step estimation error carries over to the DR regression framework considered here.

\subsection{Asymptotics of the test statistic}

To analyse the test statistic, it turns out convenient to define for any bivariate function $h: [0,1]^2 \rightarrow \mathbb{R}$ the real-valued functional 
\begin{equation}
\begin{split}\label{eq:varphiu}
\varphi_u(h(\cdot)) \coloneqq  \frac{h(u)}{p_1(u)}
+\frac{h(u)-h(u^{(1)})}{p_2(u)}
+&\frac{h(u)-h(u^{(2)})}{p_3(u)}\\
 +&\frac{h(u)-h(u^{(1)})-h(u^{(2)})+h(1,1)}{p_4(u)},
\end{split}
\end{equation}
where we recall that $p_j(u) = p_j(u,\alpha_0(u))$. Now, the crucial observation is that the $(k+1) \times 1$ gradient vector when evaluated at the true parameter vector $\theta_0 = (\alpha_0,0)^\top$ obeys:
\[
\sqrt{n}\nabla_\theta \hat\ell(u,s,\theta_0) = \sqrt{n}\begin{bmatrix}\partial_\alpha\hat\ell(u,s,\theta_0)  \\ \nabla_\beta\hat\ell(u,s,\theta_0) \end{bmatrix} =  \tau(u) \varphi_u(\mathbb{H}_n(\cdot,s;\hat\lambda)), 
\]
where $\varphi_u(\cdot)$, as defined in Eq.~\eqref{eq:varphiu}, is applied to each of the $k+1$ elements of the $(k+1) \times 1$ empirical process ${\mathbb H}_n$ given by
\begin{equation}
\begin{split}\label{eq:secproc}
      {\mathbb H}_n(u,s;\lambda) \coloneqq   \,& \sqrt{n}s_n(H_n(u,s;\lambda)- m{\sf C}(u)) \\
      H_n(u,s;\lambda) \coloneqq \,& \frac{1}{\floor{sn}}\sum_{t=1}^{\floor{sn}} X_t \bI\{\eps_{1,t}(\lambda)\leq \hat F_{1,s}^{-}(u_1;\lambda),\eps_{2,t}(\lambda)\leq \hat F_{2,s}^{-}(u_2;\lambda)\}, 
\end{split}
\end{equation}
where, by convention, $H_n(u,0;\lambda) \coloneqq 0$, and $X_t \coloneqq (1, Z_t^\top)^\top$, $m \coloneqq (1,\mu^\top)^\top$, with $\mu \coloneqq \Ex[Z_1]$.
 
Hence, it is clear that, in a first step, the properties of $\mathbb{H}_n$ need to be established to obtain the weak limit of the score. In a second step, the estimation error of $\hat\theta = (\hat\alpha,0)^\top$ from the previous section can be accounted for. 

The main idea is to relate the properties of the process ${\mathbb H}_n$ in Eq.~\eqref{eq:secproc}, based on the empirical cdf's, to its counterpart $\tilde {\mathbb H}_n$ that is based on the true cdf's:
\begin{equation}
\begin{split}\label{eq:tilde_secproc}
      \tilde {\mathbb H}_n(u,s;\lambda) \coloneqq   \,& \sqrt{n}s_n(\tilde H_n(u,s;\lambda)- m{\sf C}(u)) \\
      \tilde H_n(u,s;\lambda) \coloneqq \,& \frac{1}{\floor{sn}}\sum_{t=1}^{\floor{sn}} X_t \bI\{\eps_{1,t}(\lambda)\leq {\sf F}_{1}^{-}(u_1),\eps_{2,t}(\lambda)\leq {\sf F}_{2}^{-}(u_2)\},
        \end{split}    
\end{equation}
where we keep the convention $\tilde H_n(u,0;\lambda) \coloneqq 0$. In particular, upon decomposing $\tilde H_n = (\tilde C_n, \tilde G_n^\top)^\top$ according to the elements of $X_t = (1,\ Z_t^\top)^\top$, the following well-known identity (see, e.g., \citealp{segers2012asymptotics}) summarizes this idea:
\begin{align} \label{eq:gn_tilde_rel}
   H_n(u,s;\lambda) = \,& \tilde H_n\{\tilde C_{1,n}^{-}(u_1,s;\lambda), \tilde C_{2,n}^{-}(u_2,s;\lambda),s; \lambda\},   
\end{align}
where $\tilde C_{j,n}(u_j,s;\lambda) \coloneqq \tilde C_n(u^{(j)},s;\lambda)$, $j \in \{1,2\}$, are the marginal ecdf's corresponding to the empirical copula $\tilde C_n$. Thus, as argued elsewhere for the special case $Z_t = 1$ (see e.g. \citealp{bucher2013empirical} or \citealp{neumeyer2019copula}), we can think of Eq.~\eqref{eq:gn_tilde_rel} as a (two-argument) mapping that operates on a suitable function space and relates $H_n$ to $\tilde H_n$. The difference to the previous literature is that we also consider a marked version of the residual copula process, where the marks ($X_t$) depend on the regression residuals. For that reason, since the processes in Eq.~\eqref{eq:tilde_secproc} are not necessarily mean-zero even as $n$ diverges, we also define centred counterpart
\begin{align*}
\tilde {\mathbb H}^\circ_n(u,s;\lambda) \coloneqq \,& \sqrt{n }s_n(\tilde H_n(u,s;\lambda)-\Ex[\tilde H_n(u,s;\lambda)]).
\end{align*}
Note that $\tilde {\mathbb H}_n(\cdot,\cdot;\lambda_0)  = \tilde {\mathbb H}^\circ_n(\cdot,\cdot;\lambda_0)$.  

 As discussed below, provided the mapping underlying Eq.~\eqref{eq:gn_tilde_rel} is Hadamard-differentiability, the properties of $H_n$ can be linked to those of the functional input $\tilde H_n$. The preceding argument can be split into a stochastic part concerned with the weak convergence of the input processes and a deterministic part establishing the properties of the map.  The following lemma summarizes the stochastic part: 

\begin{lemma}\label{lem:seqtilde} Define $E \coloneqq [0,1]^2 \times [0,1] \times H$, where $H \subset \mathbb{R}^d$ is compact.
\begin{enumerate}
    \item[$(1)$] $\tilde {\mathbb H}^\circ_n(u,s;\lambda_0+h/\sqrt n)$ is stochastically equicontinuous in $(u,s,h) \in E$.
    \item[$(2)$] Define $\mathbb{B}(u,s) \coloneqq (\mathbb{B}_C(u,s),\mathbb{B}_Z(u,s)^\top)^\top$, where $\mathbb{B}_C \in \mathbb{R}$ $($defined in Proposition~\ref{prop:alpha}$)$ and $\mathbb{B}_Z \in \mathbb{R}^k$ are Gaussian processes with
    $$
    \cov[\mathbb{B}(u,s),\mathbb{B}(v,t)] = (s \wedge t)\Sigma(u,v), \quad \Sigma(u,v) \coloneqq  \begin{bmatrix} \sigma_C(u,v) & \sigma_{CZ}(u,v) \\
         \sigma_{CZ}(u,v)^\top & \sigma_Z(u,v)
    \end{bmatrix},
    $$
    where
    \begin{align*}
  \sigma_Z(u,v) =\Gamma_Z(0){\sf C}(u \wedge v)       + &{\sf C}(u){\sf C}(v)\sum_{j\geq 1}(\Gamma_Z(j)+\Gamma_Z(j)^\top) \\+& \sum_{j\geq 1}({\sf C}(v)\Delta_j(u)+{\sf C}(u)\Delta_j(v)^\top),
    \end{align*}  
    for $\normalfont\Delta_h(u) \coloneqq \Ex[(Z_1-\mu)(Z_{1+j}-\mu)(\bI\{U_1 \leq u\}-{\sf C}(u))]$, and $\normalfont\sigma_{CZ}(u,v) \coloneqq {\sf C}(u) \sum_{j \geq 1}\Ex[(Z_{1+j}-\mu)(\bI\{U_1\leq v\}-{\sf C}(v))].$    
    Then  
    $$\tilde{\mathbb H}_n(\cdot,\cdot;\lambda_0) = \tilde {\mathbb H}_n^\circ(\cdot,\cdot;\lambda_0)\Rightarrow {\mathbb H}(\cdot,\cdot) \coloneqq \begin{bmatrix}
      {\mathbb B}_C(\cdot,\cdot) \\
        \mu {\mathbb B}_C(\cdot,\cdot)+{\mathbb B}_Z(\cdot,\cdot) 
    \end{bmatrix},$$
 in $\ell^\infty([0,1]^2 \times [0,1]; {\mathbb R}^{k+1})$.
    \item[$(3)$] 
    $$\ssup\limits_{(u,s,h) \in E}|\tilde {\mathbb H}_n(u,s;\lambda_0+h/\sqrt{n})-\tilde  {\mathbb H}_n(u,s;\lambda_0) - s\Psi(u)h| = o_p(1),$$ where $\Psi \coloneqq (\psi_1(u),\dots,\psi_{k+1}(u))^\top \in \mathbb{R}^{(k+1) \times d}$, with
    $$
    \psi_j(u) \coloneqq  \sum_{p \in \{1,2\}}\partial_p {\sf C}(u)f_p({\sf F}_p^{-1}(u_k))\Ex[X_{1,j}\xi_p(u_p,R_1)], \quad j \in \{1,\dots,k+1\}
    $$
     for $\xi_p(u_p,r) \coloneqq {\sf F}_p^{-1}(u_p)\dot a_{p,2}(r)+\dot a_{p,1}(r)$, $p \in \{1,2\}.$
\end{enumerate}    
\end{lemma}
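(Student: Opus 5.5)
The plan is to write each coordinate of $\tilde{\mathbb H}^\circ_n$ as a sequential empirical process indexed by a parametric class, and then to invoke the sequential bracketing CLT of \citet{mohr2020weak}, which extends \citet{andrews1994introduction}. Write $\lambda_h\coloneqq\lambda_0+h/\sqrt n$. Using $\eps_{j,t}(\lambda)=(\eps_{j,t}-a_{j,1,t}(\lambda))/a_{j,2,t}(\lambda)$, the event $\{\eps_{j,t}(\lambda_h)\le {\sf F}_j^{-}(u_j)\}$ becomes
\[
\{\eps_{j,t}\le a_{j,1,t}(\lambda_h)+a_{j,2,t}(\lambda_h){\sf F}_j^{-1}(u_j)\}.
\]
Hence $\sqrt n s_n(\tilde H_n-\Ex\tilde H_n)$ equals $n^{-1/2}\sum_{t\le\floor{sn}}(g_{u,h}(W_t)-\Ex g_{u,h}(W_t))$, where $W_t=(\eps_t,Z_t,R_t)$ and
\[
g_{u,h}(W_t)=X_t\prod_{j}\bI\{\eps_{j,t}\le a_{j,1,t}(\lambda_h)+a_{j,2,t}(\lambda_h){\sf F}_j^{-1}(u_j)\}.
\]
A mean-value expansion of $a_{j,i,t}$ under Assumption~\ref{ass:marginal}(2)--(3) gives $a_{j,1,t}(\lambda_h)=\dot a_{j,1}(R_t)^\top h/\sqrt n+O(\ddot a_t/n)$, and similarly $a_{j,2,t}(\lambda_h)=1+\dot a_{j,2}(R_t)^\top h/\sqrt n+O(\ddot a_t/n)$.

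\textbf{Part (1).} I will bound the $L^2$ bracketing numbers of the class $\{g_{u,h}\}$ in $(u,h)$. A bracket of size $\delta$ in $(u,h)$ changes the indicator only on a set of conditional probability at most $C(\delta+|\dot a(R_t)|\delta)$, using bounded $f_j$ and bounded $xf_j(x)$ from Assumption~\ref{ass:epsdist}. Multiplying by the mark $X_t$ and applying H\"older with the moment condition $p>2(d+4)$ together with $q/(q-2)\le p$ gives $L^2$ bracket size $O(\delta^{\gamma})$ for some $\gamma>0$, so the number of brackets is polynomial in $\delta^{-1}$ in dimension $d+2$. Because $\eps_{j,t}$ is individually independent of ${\cal F}_{t-1}$ while $Z_t,R_t$ are ${\cal F}_{t-1}$-measurable, these conditional probability bounds are legitimate. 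The geometric $\alpha$-mixing of $(Z_t,R_t)$, combined with the innovations, yields mixing of $W_t$. The bracketing integral condition of \citet{mohr2020weak} for unbounded classes then gives stochastic equicontinuity in $(u,s,h)$. The $O(\ddot a_t/n)$ remainder is handled uniformly by the fact that $\max_t\ddot a_t=o_p(n^{1/q})$.

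\textbf{Part (2).} With $h=0$ and $\lambda_0$, the mean is exactly $m{\sf C}(u)$, because $\Ex[X_t\bI\{U_t\le u\}]$ factorizes into ${\sf C}(u)m$ under $H_0$ and Assumption~\ref{ass:copZ}. Fidi convergence follows from a CLT for mixing arrays, and tightness from Part (1). To compute the covariance, write $Z_t\bI\{U_t\le u\}-\mu{\sf C}(u)=\mu(\bI\{U_t\le u\}-{\sf C}(u))+(Z_t-\mu)\bI\{U_t\le u\}$, which produces the representation $\mu\mathbb B_C+\mathbb B_Z$. The first coordinate is serially uncorrelated, since $\bI\{U_t\le u\}-{\sf C}(u)$ is a martingale difference given $Z_t$. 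The long-run covariances then reduce to the stated $\sigma_Z$ and $\sigma_{CZ}$. Only the lead terms survive, because $\Ex[(Z_t-\mu)\bI\{U_t\le u\}\,\cdot\,(\bI\{U_{t+j}\le v\}-{\sf C}(v))]=0$ for $j\ge1$ by the martingale-difference property. The variance factor $(s\wedge t)$ comes from the sequential sum.

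\textbf{Part (3).} By Part (1), $\tilde{\mathbb H}^\circ_n(\cdot;\lambda_h)-\tilde{\mathbb H}^\circ_n(\cdot;\lambda_0)=o_p(1)$ uniformly, since the $L^2$ distance between $g_{u,h}$ and $g_{u,0}$ tends to zero. It therefore suffices to expand the drift $\sqrt n s_n(\Ex\tilde H_n(u,s;\lambda_h)-m{\sf C}(u))$. I will condition on ${\cal F}_{t-1}$ and use ${\mathbb P}\{U_t\le w\mid Z_t\}={\sf C}(w)$ with the margins shifted to $w_j={\sf F}_j(a_{j,1,t}+a_{j,2,t}{\sf F}_j^{-1}(u_j))$. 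A first-order Taylor expansion in $h/\sqrt n$, using Assumption~\ref{ass:Csmooth} and the $\partial_i\partial_j{\sf F}$ bound of Assumption~\ref{ass:epsdist}, gives per-observation terms
\[
X_t\sum_p\partial_p{\sf C}(u)f_p({\sf F}_p^{-1}(u_p))\,\xi_p(u_p,R_t)^\top h/\sqrt n.
\]
Averaging with the ergodic theorem yields $s\Psi(u)h$.

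The main obstacle is uniformity near the boundary of $[0,1]^2$, where ${\sf F}_p^{-1}(u_p)$ is unbounded. I will handle it using the tail condition $f_j({\sf F}_j^{-1}(x))(1+|{\sf F}_j^{-1}(x)|)\to0$ to show that the remainder vanishes uniformly. I would then split $u$ into an interior region, where the Taylor expansion applies, and boundary strips of width $\delta$, where both sides are small.
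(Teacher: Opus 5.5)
Your proposal follows essentially the same route as the paper. You linearize $a_{i,j,t}(\lambda_0+h/\sqrt n)$ in $\dot a_{i,j}(R_t)$, with the remainder controlled through $\max_{t\le n}\ddot a_t$. You bound the bracketing numbers of the marked indicator class via H\"older and invoke Mohr's sequential bracketing result for equicontinuity. You use the decomposition into $\mu\mathbb{B}_C+\mathbb{B}_Z$, with the martingale-difference property and a mixing CLT for the finite-dimensional distributions. And you reduce Part (3) to a deterministic drift expansion.

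The one step you leave implicit is turning the uniformly $o_p(n^{-1/2})$ threshold remainder into an $o_p(1)$ perturbation of the process. The paper does this by sandwiching the indicators between linearized ones shifted by $\pm\gamma_n$ and adding $\gamma$ as an extra index of the bracketed class. A minor difference is in Part (3): the paper gets uniformity over all of $[0,1]^2$ from a second-order expansion in $\lambda$ with the envelope $|Z|(1+\max_{i,j}|\dot a_{i,j}(R)|^2)$, rather than by splitting off boundary strips as you propose.
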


Lemma~\ref{lem:seqtilde} shows how the infeasible process $\tilde {\mathbb H}_n$ behaves uniformly in a local neighbourhood around the true parameter $\lambda_0$ that governs the marginal location scale models. Lemma~\ref{lem:seqtilde} holds on the whole $u \in [0,1]^2$ and not just on the compact subset ${\cal U} \subset (0,1)^2$. Moreover, if the null hypothesis is strengthened to independence between $\{Z_s\}_{s \geq 1}$ and $\{U_s\}_{s \geq 1}$, then the covariance kernel of $\Sigma(u,v)$ is block-diagonal and the limiting processes ${\mathbb B}_C$ and ${\mathbb B}_Z$ are independent.

Turning to the deterministic part, we show in the appendix that, akin to \citet[p. 64]{bucher2013empirical} and \citet[Appendix B]{bucher2016dependent}, the mapping defining Eq.~\eqref{eq:gn_tilde_rel} is Hadamard differentiable. The weak limit of ${\mathbb H}_n(\cdot,\cdot;\hat\lambda)$ follows then from the preceding lemma and the extended continuous mapping theorem. Since Lemma~\ref{lem:seqtilde} analyses $\tilde{\mathbb H}_n(\cdot,\cdot;\lambda_0+h/\sqrt n)$ for deterministic $h$, deriving the asymptotics of the feasible statistic requires replacing $h$ by the random drift $h_n=\sqrt n(\hat\lambda-\lambda_0)$.
To justify this substitution in the functional delta method, we impose the relatively weak assumption that Lemma~\ref{lem:seqtilde} (2) holds jointly with Assumption~\ref{ass:marginal} (1):

\begin{assumption}
    $(\tilde{\mathbb H}_n(\cdot,\cdot;\lambda_0),\sqrt{n}(\hat\lambda-\lambda_0)) \Rightarrow ({\mathbb H}, \Lambda)$
\end{assumption}

For a specific estimator of $\lambda$, the previous assumption can be derived under primitive assumptions.

\begin{lemma}\label{lem:sec_delt} 
${\mathbb H}_n(\cdot,\cdot;\hat\lambda) \Rightarrow {\mathbb H}(\cdot,\cdot) \coloneqq ({\mathbb C}(\cdot,\cdot),{\mathbb G}(\cdot,\cdot)^\top)^\top$, in  $\ell^\infty([0,1]^2 \times [0,1];{\mathbb R}^{k+1}),$ where ${\mathbb C}(u,s) = {\mathbb B}_C(u,s)-\sum_{j=1}^p\partial_j{\sf C}(u){\mathbb B}_C(u^{(j)},s)$ and ${\mathbb G}(u,s) = {\mathbb G}_0(u,s)+s\Xi(u)\Lambda,$
    $${\mathbb G}_0(u,s) = (\mu{\mathbb B}_C+{\mathbb B}_Z)(u,s)-\sum_{j=1}^2\partial_j{\sf C}(u)(\mu{\mathbb B}_C+{\mathbb B}_Z)(u^{(j)},s),$$
    with $\Xi(u) \coloneqq \sum_{j=1}^2\partial_j{\sf C}(u)f_j({\sf F}_j^{-1}(u_j))\cov[Z_1,\xi_p(u_j,R_1)].$
 Therefore, $$\sqrt{n}\nabla_\theta \hat\ell(\cdot,\cdot;\theta_0) \Rightarrow {\mathbb S}_{\theta_0}(\cdot,\cdot) \quad \text{ in } \ell^\infty({\cal U} \times [0,1];{\mathbb R}^{k+1}),$$ where ${\mathbb S}_{\theta_0}(u,s) = \tau(u)\varphi_u(\mathbb{H}(\cdot,s))$
\end{lemma}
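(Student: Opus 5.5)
We prove the weak convergence of ${\mathbb H}_n(\cdot,\cdot;\hat\lambda)$ by a functional delta method applied to the identity \eqref{eq:gn_tilde_rel}. The input is the pair $(\tilde{\mathbb H}_n(\cdot,\cdot;\hat\lambda),\sqrt n(\hat\lambda-\lambda_0))$.

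\textbf{Step 1: random drift.} Write $h_n\coloneqq\sqrt n(\hat\lambda-\lambda_0)=O_p(1)$. For any $\eta>0$ choose a compact $H$ with ${\mathbb P}\{h_n\in H\}\ge 1-\eta$. On this event, Lemma~\ref{lem:seqtilde}(3), applied uniformly over $h\in H$ and then evaluated at $h=h_n$, gives
\[
\tilde{\mathbb H}_n(u,s;\hat\lambda)=\tilde{\mathbb H}_n(u,s;\lambda_0)+s\Psi(u)h_n+o_p(1)
\]
uniformly in $(u,s)$. By the joint convergence assumption and the continuous mapping theorem,
\[
\tilde{\mathbb H}_n(\cdot,\cdot;\hat\lambda)\Rightarrow {\mathbb H}^\ast(u,s)\coloneqq {\mathbb H}(u,s)+s\Psi(u)\Lambda,
\]
where ${\mathbb H}$ is the limit from Lemma~\ref{lem:seqtilde}(2).

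\textbf{Step 2: the drift cancels in the copula coordinate.} Since $\Ex[X_{1,1}\xi_p]=\Ex[\xi_p(u_p,R_1)]$, the first row of $\Psi$ is exactly the linear term that the marginal pseudo-observations inherit from $\hat\lambda$. For $j\in\{1,2\}$, $\psi_1(u^{(j)})$ is the analogous term with only the $j$-th summand surviving, because $\partial_j{\sf C}(u^{(j)})=1$ and the other partial derivative vanishes on that margin.

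\textbf{Step 3: Hadamard differentiability of the composition map.} Consider
\[
\Phi:(\Gamma,\gamma_1,\gamma_2)\longmapsto \Gamma\bigl(\gamma_1^{-}(u_1,s),\gamma_2^{-}(u_2,s),s\bigr),
\]
defined on $(k+1)$-vector functions $\Gamma$ together with marginal distribution functions $\gamma_j$. We show $\Phi$ is Hadamard differentiable at $(m{\sf C},\mathrm{id},\mathrm{id})$, tangentially to continuous directions, with derivative
\[
(\gamma,g_1,g_2)\longmapsto \gamma(u,s)-\sum_j m\,\partial_j{\sf C}(u)\,g_j(u_j,s).
\]
The argument follows \citet[p.~64]{bucher2013empirical} and \citet[App.~B]{bucher2016dependent}. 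It combines the Hadamard differentiability of the sequential inverse map $\gamma_j\mapsto\gamma_j^{-}$ with the chain rule for composition. Assumption~\ref{ass:Csmooth} handles the boundary via the argument of \citet{segers2012asymptotics}. The rescaling by $s_n$, including behaviour near $s=0$, is handled as in the sequential copula literature using $\sqrt n s_n$ weights, and the extended continuous mapping theorem then applies.

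Applying the derivative to ${\mathbb H}^\ast$, with marginal directions given by the first coordinate evaluated at $u^{(j)}$, yields the following.
\begin{itemize}
\item[(i)] For the first coordinate, the $\Lambda$-terms cancel: $s\psi_1(u)$ minus $\sum_j\partial_j{\sf C}(u)\,s\psi_1(u^{(j)})$ equals zero. This leaves ${\mathbb C}(u,s)$.
\item[(ii)] For the $Z$ coordinates, the $\Lambda$-terms combine as $\Ex[Z_1\xi]-\mu\Ex[\xi]=\cov[Z_1,\xi]$. Together with the $\partial_j{\sf C}f_j$ factors this gives $s\,\Xi(u)\Lambda$, plus ${\mathbb G}_0$ built from $\mu{\mathbb B}_C+{\mathbb B}_Z$.
\end{itemize}

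\textbf{Step 4: the score.} The display preceding \eqref{eq:secproc} gives
\[
\sqrt n\nabla_\theta\hat\ell(u,s;\theta_0)=\tau(u)\varphi_u({\mathbb H}_n(\cdot,s;\hat\lambda)).
\]
Here $\varphi_u$ is a linear functional of four point evaluations, with weights $1/p_j(u)$ that are bounded and continuous on the compact set ${\cal U}\subset(0,1)^2$, where $p_j>0$. The $m{\sf C}$ centering cancels because $\sum_j e_j p_j/p_j$-type terms vanish under $H_0$. The map $h\mapsto\tau(\cdot)\varphi_\cdot(h(\cdot,\cdot))$ is therefore continuous from $\ell^\infty([0,1]^2\times[0,1])$ to $\ell^\infty({\cal U}\times[0,1])$, and the continuous mapping theorem gives ${\mathbb S}_{\theta_0}$.

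The main obstacle is Step~3. It requires Hadamard differentiability uniformly in $s$, including near $s=0$ where $\floor{sn}$ is small. It also requires that the marked process, whose centering $m{\sf C}$ involves $\mu$, be handled jointly with the unmarked marginals. I would first prove the result on $s\in[\delta,1]$ and then control $\sup_{s<\delta}$ using $\sqrt n s_n$-scaling and the equicontinuity in Lemma~\ref{lem:seqtilde}(1).
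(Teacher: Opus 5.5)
Your proof is correct and uses the same ingredients as the paper: the pathwise identity \eqref{eq:gn_tilde_rel}, Lemma~\ref{lem:seqtilde}, the B\"ucher--Volgushev/B\"ucher--Kojadinovic differentiability of the composition-with-inverse map, the extended continuous mapping theorem, and then the bounded linear functional $\tau(u)\varphi_u$. The difference is the order in which you assemble them.

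The paper carries the local parameter as an extra index. It proves its Hadamard-type lemma for an $n$-dependent map $\Phi_n$, built from the scaled inverse ${\sf in}_n$, on $E=[0,1]^2\times[0,1]\times H$. From this it obtains the limit of ${\mathbb G}_n(\cdot,\cdot;\lambda_0+h/\sqrt n)$ uniformly in $h$, and only then substitutes $h_n=\sqrt n(\hat\lambda-\lambda_0)$ via the joint-convergence assumption.

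You substitute $h_n$ first, at the level of the input: Lemma~\ref{lem:seqtilde}(3) on $\{h_n\in H\}$ plus joint convergence give $\tilde{\mathbb H}_n(\cdot,\cdot;\hat\lambda)\Rightarrow{\mathbb H}+s\Psi\Lambda$. You then apply the delta method once, in $(u,s)$ only, using that \eqref{eq:gn_tilde_rel} holds at $\lambda=\hat\lambda$.

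What each ordering buys:
\begin{itemize}
\item Yours is leaner: the deterministic lemma need not track $h$, and you never evaluate an $h$-indexed limit at a random point.
\item The paper's yields a uniform-in-$h$ local expansion of the feasible process, valid for any $\sqrt n$-consistent first step.
\end{itemize}
Your small-$s$ plan is a workable substitute for the paper's ${\sf in}_n$ device: work on $[\delta,1]$, then use $\sqrt n s_n|\tilde C^-_{j,n}(u_j,s)-u_j|\le\sup_v|\tilde{\mathbb C}_n(v^{(j)},s;\hat\lambda)|$ together with equicontinuity. In both routes you still have to check that $s\Psi(u)\Lambda$ is continuous and vanishes on the boundary faces required by the tangent set; this is where Assumption~\ref{ass:epsdist} enters.

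Two details need fixing.

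First, in Step~2 the cross term in $\psi_1(u^{(j)})$ does not vanish because $\partial_{3-j}{\sf C}(u^{(j)})=0$. That is false in general: for the independence copula, $\partial_2{\sf C}(u_1,1)=u_1$. The term vanishes because $f_{3-j}({\sf F}_{3-j}^{-1}(v))(1+|{\sf F}_{3-j}^{-1}(v)|)\to0$ as $v\to1$.

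Second, do not leave ${\mathbb G}_0$ as ``built from $\mu{\mathbb B}_C+{\mathbb B}_Z$''. The marked coordinates are composed with inverses of the \emph{unmarked} margins, so your derivative gives
\[
{\mathbb G}_0(u,s)=(\mu{\mathbb B}_C+{\mathbb B}_Z)(u,s)-\mu\sum_{j=1}^2\partial_j{\sf C}(u)\,{\mathbb B}_C(u^{(j)},s).
\]
This is also what the paper's own map $\Upsilon(g,\kappa)=g-\mu\sum_p\partial_p{\sf C}(u)\kappa(u^{(p)},\cdot,\cdot)$ produces. The displayed ${\mathbb G}_0$ in the statement instead has $(\mu{\mathbb B}_C+{\mathbb B}_Z)(u^{(j)},s)$ in the correction. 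It therefore differs from yours by $\sum_j\partial_j{\sf C}(u){\mathbb B}_Z(u^{(j)},s)$, which is not identically zero.

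You can check which version is right at $u=u^{(1)}$ with $\lambda=\lambda_0$. There the number of $t$ with $\hat U_{1,t,s}\le u_1$ is deterministic up to rounding, so the limit is ${\mathbb B}_Z(u^{(1)},s)$. Your formula reproduces this; the displayed one does not, so the display looks like a misprint. State the limit in the form your derivative gives.
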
  

Lemma~\ref{lem:sec_delt} implies that the feasible marked sequential empirical process converges uniformly to a tight Gaussian process. The score with respect to $\alpha$ depends only on the true underlying copula and is
asymptotically unaffected by first-step estimation; in particular,
${\mathbb S}_{\alpha_0}(u,s)=\tau(u){\mathbb C}(u,s)\sum_{j=1}^4 1/p_j(u)$, where $p_j(u)$ are
the four quadrant probabilities. This is in line with earlier results on sequential empirical copula processes based on AR-GARCH residuals (see, e.g. \citealp{nasri2022change} or \citealp{bomw:24}).  In contrast, inference on $\beta$ generally inherits the additional drift term unless
$\cov[Z_1,\xi_p(u_p,R_1)]=0$, $p\in\{1,2\}$. Finally, we can use these results to obtain the limiting distribution of the test statistic:

\begin{proposition}\label{prop:seq_score_hat}
 $\sqrt{n}\nabla_\beta \hat\ell(\cdot,\cdot;\hat\theta) \Rightarrow {\mathbb S}_\beta$ in $\ell^\infty({\cal U} \times [0,1]; \mathbb{R}^k)$, where
 $$
  {\mathbb S}_\beta(u,s) \coloneqq\tau(u) [\varphi_u(\mathbb{G}(\cdot,s))-s\mu\varphi_u(\mathbb{C}(\cdot,1))] 
 $$
 and
 \[
(\Delta_1(\cdot),\Delta_2(\cdot)) \Rightarrow (\ssup\limits_{s \in [0,1]}|{\mathbb S}_\beta(\cdot,s)-s{\mathbb S}_\beta(\cdot,1)|_\infty, | {\mathbb S}_\beta(\cdot,1)|_\infty).
\]
\end{proposition}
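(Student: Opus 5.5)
The plan is to expand the $\beta$-score around the true value $\theta_0=(\alpha_0,0)^\top$ in the direction of $\alpha$ only. That reduces everything to Lemma~\ref{lem:sec_delt} plus the expansion of $\hat\alpha$ behind Proposition~\ref{prop:alpha}. The statements for $\Delta_1,\Delta_2$ then follow by the continuous mapping theorem.

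\textbf{Step 1: mean-value expansion in $\alpha$.} Since $\hat\theta$ and $\theta_0$ differ only in the $\alpha$-coordinate, write, uniformly in $(u,s)\in{\cal U}\times[0,1]$,
\[
\sqrt n\nabla_\beta\hat\ell(u,s;\hat\theta)=\sqrt n\nabla_\beta\hat\ell(u,s;\theta_0)+\partial_\alpha\nabla_\beta\hat\ell(u,s;\bar\theta)\,\sqrt n(\hat\alpha-\alpha_0)(u),
\]
where $\bar\alpha(u)$ lies between $\hat\alpha(u)$ and $\alpha_0(u)$.
\begin{itemize}
\item By Lemma~\ref{lem:sec_delt}, the first term converges to the $\beta$-block of ${\mathbb S}_{\theta_0}$, namely $\tau(u)\varphi_u({\mathbb G}(\cdot,s))$.
\item The cross derivative is $\frac1n\sum_{t\le\lfloor sn\rfloor}Z_t\,g(u,\alpha;\hat d_t(u,s))$. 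Here $g$ is smooth in $\alpha$, uniformly on the compact set ${\cal U}\times{\cal A}$, because the $p_j(u,\alpha)$ are bounded away from zero there.
\item Using Proposition~\ref{prop:alpha}(1), a uniform LLN for the sequential marked averages, and $\hat U_{t,s}\to U_t$ (a by-product of Lemma~\ref{lem:sec_delt}), this converges uniformly to
\[
s\,\Ex[Z_1 g(u,\alpha_0;d_1(u))]=-s\,\mu\,\tau(u)^2\sum_j 1/p_j(u).
\]
This is the conditional information. Under $H_0$ the term factorizes because $\Ex[d_{j,t}\mid Z_t]=p_j(u)$, and only the minus-expected-Hessian form survives since the score has mean zero. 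Call this limit $-s\mu I(u)$.
\end{itemize}

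\textbf{Step 2: linearize $\hat\alpha$.} Apply the same expansion to the $\alpha$-score, whose value at $\hat\alpha$ is zero since $\hat\alpha$ is an interior maximizer. This gives
\[
\sqrt n(\hat\alpha-\alpha_0)(u)=I(u)^{-1}\sqrt n\partial_\alpha\hat\ell(u,1;\theta_0)+o_p(1)
\]
uniformly in $u$. By Lemma~\ref{lem:sec_delt}, $\sqrt n\partial_\alpha\hat\ell(u,1;\theta_0)$ has limit $\tau(u)\varphi_u({\mathbb C}(\cdot,1))$, which is consistent with Proposition~\ref{prop:alpha}(2). Plugging into Step 1 gives
\[
\sqrt n\nabla_\beta\hat\ell(u,s;\hat\theta)=\sqrt n\nabla_\beta\hat\ell(u,s;\theta_0)-s\mu\,\sqrt n\partial_\alpha\hat\ell(u,1;\theta_0)+o_p(1).
\]
Both score terms are continuous functionals of the jointly converging process $\sqrt n\nabla_\theta\hat\ell(\cdot,\cdot;\theta_0)$. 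Joint convergence holds because both are components of the same vector ${\mathbb H}_n(\cdot,\cdot;\hat\lambda)$. The continuous mapping theorem then yields ${\mathbb S}_\beta(u,s)=\tau(u)[\varphi_u({\mathbb G}(\cdot,s))-s\mu\varphi_u({\mathbb C}(\cdot,1))]$.

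\textbf{Step 3: the detectors.} The maps
\[
f\mapsto \sup_s|f(\cdot,s)-sf(\cdot,1)|_\infty,\qquad f\mapsto|f(\cdot,1)|_\infty
\]
are continuous from $\ell^\infty({\cal U}\times[0,1];\mathbb R^k)$ to $\ell^\infty({\cal U})^2$. Applying them jointly to the weak limit above gives the stated convergence of $(\Delta_1,\Delta_2)$. Note that in $\Delta_1$ the estimation term $-s\mu\varphi_u({\mathbb C}(\cdot,1))$ cancels exactly, as does the drift $s\Xi(u)\Lambda$ inside ${\mathbb G}$.

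\textbf{Main obstacle.} I expect the hardest step to be the uniform (in $u,s$, and $\bar\alpha$) convergence of the Hessian term in Step 1. That term involves estimated sequential ranks $\hat U_{t,s}$, and the unbounded marks $Z_t$ enter the cross derivative directly. My approach would be to bound it by
\[
\sup_{u,s}\Big|\frac1n\sum_{t\le\lfloor sn\rfloor}Z_t(\hat d_t(u,s)-d_t(u))\Big|+\sup_{u,s}\Big|\frac1n\sum_{t\le\lfloor sn\rfloor}Z_t d_t(u)-s\Ex[Z_1d_1(u)]\Big|.
\]
The first piece is $O_p(n^{-1/2})$ by Lemma~\ref{lem:sec_delt}, since it equals $n^{-1/2}$ times differences of ${\mathbb H}_n$ and $\tilde{\mathbb H}_n$-type processes. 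The second piece is $o_p(1)$ by Lemma~\ref{lem:seqtilde}(2). Smoothness of $g$ in $\alpha$ then absorbs the replacement of $\bar\alpha$ by $\alpha_0$.
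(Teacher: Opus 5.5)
Your proposal is correct and follows essentially the same route as the paper. You expand the $\beta$-score in $\alpha$ around $\theta_0$ and show the cross-derivative converges uniformly to $-s\mu\tau(u)^2\sum_j 1/p_j(u)$, by replacing $\hat d_t$ with $d_t$ via Lemmas~\ref{lem:seqtilde} and~\ref{lem:sec_delt} plus a uniform LLN, and then substitute the expansion of $\sqrt n(\hat\alpha-\alpha_0)$ from Proposition~\ref{prop:alpha}. Your refinements are writing $\sqrt n(\hat\alpha-\alpha_0)$ as $I(u)^{-1}$ times the $\alpha$-score at $\theta_0$, which makes explicit the joint convergence of $\mathbb{G}$ and $\mathbb{C}$ that the paper's Slutsky step tacitly needs, and spelling out the continuous-mapping step for $(\Delta_1,\Delta_2)$, which the paper leaves implicit.
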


Three points are worth noting: Firstly, because the Gaussian bridge ${\mathbb S}_\beta(\cdot,s)-s{\mathbb S}_\beta(\cdot,1)$ is independent of the endpoint ${\mathbb S}_\beta(\cdot,1)$, the limiting distribution of the statistics $\Delta_1$ and $\Delta_2$ are independent, too. Secondly, if $\mu = 0$, then $\sqrt{n}\nabla_\theta \hat\ell (\cdot,\cdot,\hat\theta) = \sqrt{n}\nabla_\theta \hat\ell (\cdot,\cdot,\theta_0) + o_p(1)$. Finally, $\Delta_1$ behaves as if you knew both $\lambda_0$ and $\alpha_0$; i.e.
\[
{\mathbb S}_\beta(u,s)-s{\mathbb S}_\beta(u,1) = \tau(u)(\varphi_u(\mathbb{G}_0(\cdot,s))-s\varphi_u(\mathbb{G}_0(\cdot,1))),
\]
using ${\mathbb G}(u,s) = {\mathbb G}_0(u,s)+s\Xi(u)\Lambda$. The estimation error does not, however, cancel in $\Delta_2$. For this reason we suggest the bootstrap scheme described in Section~\ref{sec:teststat}.

\section{Monte Carlo simulations}\label{sec:MC}

In this section, we study the finite-sample properties of the tests introduced in the previous sections using Monte Carlo experiments. In particular, we consider our test $T = \omega(\Delta_1+\Delta_2)$, as well as its individual components, i.e. the aggregated CUSUM test $\Delta_1 \coloneqq \omega(\Delta_1)$ and the aggregated LM test $\Delta_2 \coloneqq \omega(\Delta_2)$. Following \citet[Eq. 2.6]{andpol:1994}, the aggregation function $\omega: [0,1] \rightarrow {\mathbb R}_+$ is chosen as $\omega(f) = \llog \int_{\cal U}{\sf exp}\{\frac1{2}f(u)\}\d u$.

The state variable $Z_t$ is generated as a stationary AR(1) process,
$Z_t = 0.85 Z_{t-1} + v_t$, where the error term is ${\sf IID}$ with $v_t \sim \mathcal{N}(0,1-0.85^2)$.
Each marginal time series $Y_{j,t}$, $j\in\{1,2\}$, follows an AR(1)--GARCH(1,1) model:
\[
Y_{j,t} = \mu + \phi Y_{j,t-1} + \gamma Z_t + \eta_{j,t},\quad  
\sigma_{j,t}^2 = \omega + \alpha\eta_{j,t-1}^2 + \beta\,\sigma_{j,t-1}^2,
\]
where $\eta_{j,t}=\sigma_{j,t}\eps_{j,t}$ and $\eps_{j,t}$ are {\sf IID} standard Gaussian innovations. We set
$\mu=0$, $\phi=0.1$, $\gamma=1.5$, $\omega=0.01$, $\alpha=0.1$, and $\beta=0.85$. The AR(1)-GARCH(1,1)-parameters are estimated using the (quasi) Maximum Likelihood estimator from {\sf R}-package {\sf rugarch} (\citealp{rugarch}). After this, the ranks of the residuals are obtained in order to calculate the test statistics. 

We consider three copula specifications under the null:
($i$) a Gaussian copula with constant correlation, ($ii$) a Frank copula with constant parameter, and ($iii$) a Gaussian \emph{patchwork} copula \citep{Durante:2009} with correlation
$\rho(u)=\tanh(\alpha(u))$, where $\alpha(u)=1/4$ for $u\in[0,0.5]^2$ and $\alpha(u)=1/2$ otherwise. Intuitively, the copula in case ($iii$) is a rectangular patchwork that allows for different dependence in different subrectangles. Under the alternative, we consider a Gaussian patchwork copula with time-varying correlation
$\rho_t(u)={\sf tanh}(\alpha(u)+\beta_t(u)Z_t)$, where $\alpha(u)$ is as above.
For $u\in[0,0.5]^2$ we set: 
\begin{itemize}[leftmargin=\parindent, labelwidth=-1cm, labelsep=0.6em, align=left]
  \item[1. ({\sf constant})] $\beta_t(u)=1/2$ for all $t$;
  \item[2. ({\sf mid-break})] $\beta_t(u)=0$, $t\le\lfloor n/2\rfloor$ and $\beta_t(u)=1/2$,  $t>\lfloor n/2\rfloor$;
  \item[3. ({\sf offsetting break})] $\beta_t(u)=-1/2$, $t\le\lfloor n/2\rfloor$ and $\beta_t(u)=1/2$,  $t>\lfloor n/2\rfloor$.
\end{itemize}
We set $\beta_t(u)=0$ for $u\notin[0,0.5]^2$.

We focus on the quantile dependencies (e.g. \citealp{patton:12,patton2013copula} or \citealp{patton:13,oh2017modeling}), i.e. we restrict the analysis to the main diagonal $u=(u_1,u_2)^\top$ with $u_1=u_2=u$ and report results for the lower and upper diagonal regions
${\cal U}_l=[0.05,0.50]$ and ${\cal U}_u=[0.50,0.95]$. Following \citet{gonccalves2023}, the MBB block length $l$ is set equal to the next integer of the bandwidth from the automatic procedure for the Bartlett kernel of \cite{andrews1991heteroskedasticity}.

As $n$ increases, the empirical rejection rate approaches $5\%$ in all cases, the power properties reflect the design of the tests in all cases. In the case of the upper tail, none of the tests has non-trivial power as expected. In the case of the lower tail, the LM has highest power in the first alternative scenario (constant GC). In the mid-break scenario, the test still has power, but there is no non-trivial power in the offset scenario. The CUSUM test has highest power in the offset scenario, some power in the mid-break scenario and no power in the first alternative scenario. The combination of both test statistics always has non-trivial power, whereas the ranking is similar to the ranking of the LM test.

\begin{table}[!ht]
\centering
\small
\setlength{\tabcolsep}{3pt}
\renewcommand{\arraystretch}{1.08}

\begin{tabular}{llrccc c ccc}
\toprule
& & & \multicolumn{3}{c}{\textsf{size}} & & \multicolumn{3}{c}{\textsf{power}} \\
\cmidrule(lr){4-6}\cmidrule(lr){8-10}
 &  & $n$ & Gauss & Frank & Patch & & constant & mid-break & offset \\
\midrule

{\it lower} & $\Delta_1$
& 500     & 3.81\% & 3.21\% & 3.81\% & & 5.01\%  & 13.23\% & 41.88\% \\
& $\Delta_2$
&         & 3.01\% & 3.41\% & 3.81\% & & 40.68\% & 10.42\% & 4.61\% \\
& $T$
&         & 3.41\% & 3.81\% & 3.01\% & & 34.87\% & 12.22\% & 10.22\% \\
\cmidrule(lr){3-10}
& $\Delta_1$
& 1{,}000 & 3.41\% & 4.01\% & 3.81\% & & 5.81\%  & 22.04\% & 78.56\% \\
& $\Delta_2$
&         & 2.20\% & 3.81\% & 3.01\% & & 84.37\% & 23.25\% & 3.21\% \\
& $T$
&         & 2.00\% & 3.61\% & 3.61\% & & 80.56\% & 27.45\% & 16.83\% \\
\cmidrule(lr){3-10}
& $\Delta_1$
& 1{,}500 & 5.61\% & 6.41\% & 5.21\% & & 8.42\%  & 35.27\% & 92.99\% \\
& $\Delta_2$
&         & 3.61\% & 2.00\% & 3.21\% & & 96.79\% & 36.07\% & 4.81\% \\
& $T$
&         & 3.41\% & 2.61\% & 4.01\% & & 92.99\% & 45.49\% & 32.46\% \\
\midrule

{\it upper} & $\Delta_1$
& 500     & 4.81\% & 3.21\% & 2.61\% & & 1.60\% & 1.60\% & 1.40\% \\
& $\Delta_2$
&         & 2.20\% & 3.81\% & 1.60\% & & 3.41\% & 3.01\% & 2.61\% \\
& $T$
&         & 2.00\% & 4.21\% & 1.40\% & & 1.20\% & 1.60\% & 2.00\% \\
\cmidrule(lr){3-10}
& $\Delta_1$
& 1{,}000 & 4.81\% & 4.61\% & 4.41\% & & 4.41\% & 4.21\% & 4.61\% \\
& $\Delta_2$
&         & 4.81\% & 3.01\% & 4.01\% & & 4.21\% & 4.21\% & 4.21\% \\
& $T$
&         & 4.21\% & 3.41\% & 4.41\% & & 4.61\% & 4.81\% & 4.61\% \\
\cmidrule(lr){3-10}
& $\Delta_1$
& 1{,}500 & 4.61\% & 4.21\% & 5.81\% & & 5.21\% & 4.81\% & 4.81\% \\
& $\Delta_2$
&         & 3.81\% & 2.81\% & 2.61\% & & 3.81\% & 3.81\% & 4.21\% \\
& $T$
&         & 3.61\% & 3.21\% & 2.81\% & & 3.21\% & 3.01\% & 2.61\% \\

\bottomrule
\end{tabular}

\caption{Monte Carlo results (rejection frequencies in \%) at the 5\% nominal level based on 499 replications. 
Here, $\Delta_1$ denotes the CUSUM component, $\Delta_2$ the LM component, and $T=\omega(\Delta_1+\Delta_2)$ the combined test statistic. 
The MBB block length $l$ is set equal to the Andrews automatic bandwidth for the Bartlett kernel with $B=499$.}
\label{tab:MC}
\end{table}

\section{Empirical application}\label{seq:emp}

We now revisit the preliminary evidence from the introductory example, summarized in Figure~\ref{fig:ellipse} of Section~\ref{sec:intro}. Our goal is to examine how the dependence between the market return ($r_{m,t}$) and an individual asset return ($r_t$) varies with lagged market conditions. Specifically, let $r_{m,t}$ denote the daily CRSP value-weighted market index return. For $r_t$, we consider daily CRSP returns (Jan 2000-Dec 2019) of JP Morgan Chase ({\sf JPM}), Morgan Stanley ({\sf MS}), and {\sf AIG}, representing the Depositories, Broker-Dealers, and Insurance groups, respectively; see also \cite{han:2016}.

As discussed in Section~\ref{sec:model}, for each $u=(u_1,u_2)\in{\cal U}$, these co-movements are summarized by the local dependence measure $\varrho_t(u)$, which determines the conditional copula of $(r_t,r_{m,t})$. We distinguish between two predetermined market states $Z_t\in\{0,1\}$ corresponding to bear and bull market conditions. Specifically, we define the predetermined down market indicator
\(
Z_t \coloneqq \bI\{r_{m,t-1}<0\},
\)
so that $Z_t=1$ represents a \emph{bear} (down market) state and $Z_t=0$ a \emph{bull} (up market) state. Hence, for a given $u\in{\cal U}$ and in the absence of structural breaks, dependence is described by the state-specific local dependence measures
\[
\varrho^{\text{bear}}(u) = {\sf tanh}(\alpha(u)+\beta(u)) \quad \text{if } Z_t=1,
\qquad
\varrho^{\text{bull}}(u) = {\sf tanh}(\alpha(u)) \quad \text{if } Z_t=0.
\]

For both series, we fit GJR--AR(1)--GARCH(1,1) models \citep{glosten1993relation}. In the conditional mean of the individual asset return, we additionally include the market-state indicator $Z_t$. Estimation is carried out by quasi-maximum likelihood using the {\sf R}-package {\sf rugarch} (\citealp{rugarch}).

As motivated by the systemic-risk application, we focus on quantile dependence along the main diagonal of the conditional copula, i.e.,
$u=(u_1,u_2)^\top$ with $u_1=u_2$. As in the simulation study, we consider the lower and upper regions
\({\cal U}_{l}=[0.05,0.50],\) \({\cal U}_{u}=[0.50,0.95].
\)
Finally, the block length is set equal to the optimal bandwidth selected by the automatic procedure of \citet{andrews1991heteroskedasticity} for the Bartlett kernel.

If $T$ rejects, we apply the procedure of \citet{mwt:25} to determine whether the rejection is driven by $H_{1,1}$ or by $H_{1,2}$. We begin with a significance level of $\alpha=0.10$. If $T$ rejects at level $\alpha$, we compute the CUSUM detector $\Delta_1$. If the CUSUM test rejects at the level
\(
1-(1-\alpha)^{1/2}=0.051,
\)
we declare a break $s \in [0,1]$. We then set the adjusted significance level to
\(
1-(1-\alpha)^{1/4}=0.026
\)
and recompute $T$ on the subsamples $[0,s]$ and $[s,1]$. This Šidák correction ensures that the size corresponds to an overall significance level $\alpha$ (see also \citealp{galeano:2017}).

Overall, the results reported in Table~\ref{tab:emp} suggest that state-dependence becomes stronger during crisis periods. For {\sf MS}, we detect significant dependence predictability in both the lower and upper tails after 2005, but not before. For {\sf JPM}, we detect significant dependence predictability in the lower tail after 2002, but not before. For {\sf AIG}, we detect significant dependence predictability in the upper tail after 2002, but not before. Hence, across all three institutions, the evidence points to pronounced dependence predictability around 2008, often regarded as the climax of the global financial crisis associated with real-estate frictions (see \citealp{wied:2012}).

\begin{table}[ht]
\centering
\begin{tabular}{llrccc}
\toprule
&& & $[0,1]$ & $[0,s]$ & $[s,1]$ \\
\midrule

{\sf MS}&{\it lower}&$T$ & \textbf{0.000} & 0.144 & 0.028 \\
&&break   &{\tt 2005/10/11} &&\\[-5pt]
&&        &{\footnotesize (CU $p$=0.000)} &&\\
\cmidrule(lr){4-6}
&{\it upper}&$T$ & \textbf{0.000} & 0.317 & \textbf{0.008} \\
&&break   &{\tt 2005/10/11} &&\\[-5pt]
&&        &{\footnotesize (CU $p$=0.000)} &&\\
\midrule

{\sf JPM}&{\it lower}&$T$ &\textbf{ 0.002} & 0.385 & \textbf{0.000} \\
&&break   &{\tt 2002/07/02} &&\\[-5pt]
&&        &{\footnotesize (CU $p$=0.004)} &&\\
\cmidrule(lr){4-6}
&{\it upper}&$T$ & 0.246 &  &  \\
&&break   & -- &&\\[-5pt]
&&        & &&\\
\midrule

{\sf AIG}&{\it lower}&$T$ & 0.126 &  &  \\
&&break   & -- &&\\[-5pt]
&&        & &&\\
\cmidrule(lr){4-6}
&{\it upper}&$T$ & \textbf{0.024} & 0.994 & \textbf{0.012} \\
&&break   &{\tt 2002/06/14} &&\\[-5pt]
&&        &{\footnotesize (CU $p$=0.036)} &&\\

\bottomrule
\end{tabular}
\caption{Bootstrap (MBB) $p$-values using $B=499$ over 2000/01/01--2019/12/12. For the full-sample $[0,1]$ we use a first-step significance level $\alpha=0.10$; for the subsamples $[0,s]$ and $[s,1]$ (when a break is detected at $1-(1-\alpha)^{1/2} \simeq 0.051$) we use the adjusted level $1-(1-\alpha)^{1/4}\simeq 0.026$. The ``break'' row reports the estimated break date (if any), with the full-sample CUSUM $p$-value in parentheses.}
\label{tab:emp}
\end{table}

A possible economic interpretation is that crises amplify negative signals. This view is consistent with evidence that dependence and correlation tend to strengthen in downturns, in particular in the lower tail and in bear-market regimes (e.g., \citealp{longin2001extreme, ang2002asymmetric}). During crisis periods, a single negative market return may carry greater informational content and is less likely to be interpreted as a transitory fluctuation; instead, it is more readily viewed as confirmation of systemic stress. This interpretation is also in line with the systemic-risk literature, which emphasizes increases in tail co-movement under financial distress (\citealp{adrian:16}).

\section{Concluding remarks}\label{sec:con}

This paper proposes a flexible semiparametric test for dependence predictability that is designed to accommodate a broad class of marginal dynamics, remain agnostic about the copula family, and provide local information on how predetermined covariates affect the dependence structure in different parts of the distribution and over different periods. Although the primary application here is financial, related challenges are frequent in economics and beyond. For example, similar questions of Granger causality arise frequently in environmental econometrics (e.g. \citealp{Runge2019InferringCausation}).

Several technical extensions constitute promising directions for future research: First, it would be of interest to generalize the framework to multivariate settings with $J\ge 2$ variates $Y_{1,t},\dots,Y_{J,t}$, which would require addressing the increased complexity of the local Gaussian approximation. Secondly, in the spirit of \citet{patton:06}, one could let the state vector be constructed from lagged pseudo-observations, for example
\[
\hat Z_{t,s}(\lambda)\coloneqq \Phi^{-1}(\hat U_{1,t-1,s})\Phi^{-1}(\hat U_{2,t-1,s})
\]
thereby allowing for more flexible dependence dynamics driven by the history of the system. 
Thirdly, replacing the parametric marginal filtering step with nonparametric methods as in \citet{neumeyer2019copula} and \cite{chen2021efficient} could be an attractive alternative. Their results suggest that a $\sqrt{n}$-consistent DR estimation of $\theta$ might be possible. Finally, it may be of interest to relax Assumption~\ref{ass:copZ} by allowing for additional control variables $W_t$ when assessing the effect of $Z_t$, i.e., to consider conditional copulas of the form ${\sf C}(\cdot \mid {\cal F}_{t-1}) = {\sf C}(\cdot \mid W_t, Z_t)$.


\bibliography{bibl}

\begin{thebibliography}{65}
\newcommand{\enquote}[1]{``#1''}
\expandafter\ifx\csname natexlab\endcsname\relax\def\natexlab#1{#1}\fi

\bibitem[\protect\citeauthoryear{Acar, Craiu, and Yao}{Acar et~al.}{2013}]{acar:13}
\textsc{Acar, E.~F., R.~V. Craiu, and F.~Yao} (2013): \enquote{Statistical testing of covariate effects in conditional copula models,} \emph{Electronic Journal of Statistics}, 7, 2822--2850.

\bibitem[\protect\citeauthoryear{Adrian and Brunnermeier}{Adrian and Brunnermeier}{2016}]{adrian:16}
\textsc{Adrian, T. and M.~Brunnermeier} (2016): \enquote{CoVaR,} \emph{American Economic Review}, 106, 1705--1741.

\bibitem[\protect\citeauthoryear{Andrews}{Andrews}{1991}]{andrews1991heteroskedasticity}
\textsc{Andrews, D.~W.} (1991): \enquote{Heteroskedasticity and autocorrelation consistent covariance matrix estimation,} \emph{Econometrica: Journal of the Econometric Society}, 817--858.

\bibitem[\protect\citeauthoryear{Andrews and Pollard}{Andrews and Pollard}{1994}]{andrews1994introduction}
\textsc{Andrews, D.~W. and D.~Pollard} (1994): \enquote{An introduction to functional central limit theorems for dependent stochastic processes,} \emph{International Statistical Review/Revue Internationale de Statistique}, 119--132.

\bibitem[\protect\citeauthoryear{Andrews and Ploberger}{Andrews and Ploberger}{1994}]{andpol:1994}
\textsc{Andrews, D. W.~K. and W.~Ploberger} (1994): \enquote{Optimal Tests when a Nuisance Parameter is Present only under the Alternative,} \emph{Econometrica}, 62, 1383--1414.

\bibitem[\protect\citeauthoryear{Ang and Chen}{Ang and Chen}{2002}]{ang2002asymmetric}
\textsc{Ang, A. and J.~Chen} (2002): \enquote{Asymmetric correlations of equity portfolios,} \emph{Journal of Financial Economics}, 63, 443--494.

\bibitem[\protect\citeauthoryear{Borsch, Mayer, and Wied}{Borsch et~al.}{2024}]{bomw:24}
\textsc{Borsch, M., A.~Mayer, and D.~Wied} (2024): \enquote{Consistent Estimation of Multiple Breakpoints in Dependence Measures,} \emph{Journal of Business \& Economic Statistics}, 42, 695--706.

\bibitem[\protect\citeauthoryear{Bouezmarni, Rombouts, and Taamouti}{Bouezmarni et~al.}{2012}]{Bouezmarni:2012}
\textsc{Bouezmarni, T., J.~V. Rombouts, and A.~Taamouti} (2012): \enquote{Nonparametric Copula-Based Test for Conditional Independence with Applications to Granger Causality,} \emph{Journal of Business \& Economic Statistics}, 30, 275--287.

\bibitem[\protect\citeauthoryear{Brownlees and Engle}{Brownlees and Engle}{2017}]{BrownleesEngle2017SRISK}
\textsc{Brownlees, C.~T. and R.~F. Engle} (2017): \enquote{SRISK: A Conditional Capital Shortfall Measure of Systemic Risk,} \emph{The Review of Financial Studies}, 30, 48--79.

\bibitem[\protect\citeauthoryear{B{\"u}cher and Kojadinovic}{B{\"u}cher and Kojadinovic}{2016}]{bucher2016dependent}
\textsc{B{\"u}cher, A. and I.~Kojadinovic} (2016): \enquote{A dependent multiplier bootstrap for the sequential empirical copula process under strong mixing,} \emph{Bernoulli}, 2, 927--968.

\bibitem[\protect\citeauthoryear{B{\"u}cher and Volgushev}{B{\"u}cher and Volgushev}{2013}]{bucher2013empirical}
\textsc{B{\"u}cher, A. and S.~Volgushev} (2013): \enquote{Empirical and sequential empirical copula processes under serial dependence,} \emph{Journal of Multivariate Analysis}, 119, 61--70.

\bibitem[\protect\citeauthoryear{Bücher, Kojadinovic, Rohmer, and Segers}{Bücher et~al.}{2014}]{buecher:14}
\textsc{Bücher, A., I.~Kojadinovic, T.~Rohmer, and J.~Segers} (2014): \enquote{Detecting Changes in Cross-Sectional Dependence in Multivariate Time Series,} \emph{Journal of Multivariate Analysis}, 132, 111--128.

\bibitem[\protect\citeauthoryear{Cappiello, Engle, and Sheppard}{Cappiello et~al.}{2006}]{cappiello2006asymmetric}
\textsc{Cappiello, L., R.~F. Engle, and K.~Sheppard} (2006): \enquote{Asymmetric dynamics in the correlations of global equity and bond returns,} \emph{Journal of Financial Econometrics}, 4, 537--572.

\bibitem[\protect\citeauthoryear{Chen and Fan}{Chen and Fan}{2006{\natexlab{a}}}]{chen2006}
\textsc{Chen, X. and Y.~Fan} (2006{\natexlab{a}}): \enquote{Estimation and model selection of semiparametric copula-based multivariate dynamic models under copula misspecification,} \emph{Journal of Econometrics}, 135, 125--154.

\bibitem[\protect\citeauthoryear{Chen and Fan}{Chen and Fan}{2006{\natexlab{b}}}]{chen2006estimation}
---\hspace{-.1pt}---\hspace{-.1pt}--- (2006{\natexlab{b}}): \enquote{Estimation of copula-based semiparametric time series models,} \emph{Journal of Econometrics}, 130, 307--335.

\bibitem[\protect\citeauthoryear{Chen, Huang, and Yi}{Chen et~al.}{2021}]{chen2021efficient}
\textsc{Chen, X., Z.~Huang, and Y.~Yi} (2021): \enquote{Efficient estimation of multivariate semi-nonparametric GARCH filtered copula models,} \emph{Journal of Econometrics}, 222, 484--501.

\bibitem[\protect\citeauthoryear{Chernozhukov, Fern{\'a}ndez-Val, Han, and W{\"u}thrich}{Chernozhukov et~al.}{2024}]{cherno24}
\textsc{Chernozhukov, V., I.~Fern{\'a}ndez-Val, S.~Han, and K.~W{\"u}thrich} (2024): \enquote{Estimating Causal Effects of Discrete and Continuous Treatments with Binary Instruments,} \emph{arXiv:2403.05850}.

\bibitem[\protect\citeauthoryear{Chernozhukov, Fern{\'a}ndez-Val, and Melly}{Chernozhukov et~al.}{2013}]{cher:13}
\textsc{Chernozhukov, V., I.~Fern{\'a}ndez-Val, and B.~Melly} (2013): \enquote{Inference on Counterfactual Distributions,} \emph{Econometrica}, 81.

\bibitem[\protect\citeauthoryear{Diebold and Yilmaz}{Diebold and Yilmaz}{2014}]{DieboldYilmaz2014Connectedness}
\textsc{Diebold, F.~X. and K.~Yilmaz} (2014): \enquote{On the Network Topology of Variance Decompositions: Measuring the Connectedness of Financial Firms,} \emph{Journal of Econometrics}, 182, 119--134.

\bibitem[\protect\citeauthoryear{Doukhan, Massart, and Rio}{Doukhan et~al.}{1995}]{doukhan1995invariance}
\textsc{Doukhan, P., P.~Massart, and E.~Rio} (1995): \enquote{Invariance principles for absolutely regular empirical processes,} in \emph{Annales de l'IHP Probabilit{\'e}s et statistiques}, vol.~31, 393--427.

\bibitem[\protect\citeauthoryear{Durante, Saminger-Platz, and Sarkoci}{Durante et~al.}{2009}]{Durante:2009}
\textsc{Durante, F., S.~Saminger-Platz, and P.~Sarkoci} (2009): \enquote{Rectangular Patchwork for Bivariate Copulas and Tail Dependence,} \emph{Communications in Statistics - Theory and Methods}, 38, 2515--2527.

\bibitem[\protect\citeauthoryear{Fermanian, Radulovic, and Wegkamp}{Fermanian et~al.}{2004}]{fermanian:2004}
\textsc{Fermanian, J.-D., D.~Radulovic, and M.~Wegkamp} (2004): \enquote{Weak convergence of empirical copula processes,} \emph{Bernoulli}, 10, 847--860.

\bibitem[\protect\citeauthoryear{Forbes and Rigobon}{Forbes and Rigobon}{2002}]{forbes2002no}
\textsc{Forbes, K.~J. and R.~Rigobon} (2002): \enquote{No contagion, only interdependence: measuring stock market comovements,} \emph{The Journal of Finance}, 57, 2223--2261.

\bibitem[\protect\citeauthoryear{Foresi and Peracchi}{Foresi and Peracchi}{1995}]{foresiperacchi:1995}
\textsc{Foresi, S. and F.~Peracchi} (1995): \enquote{The Conditional Distribution of Excess Returns: An Empirical Analysis,} \emph{Journal of the American Statistical Association}, 90, 451--466.

\bibitem[\protect\citeauthoryear{Fuentes-Martínez, Crimaldi, and Rungi}{Fuentes-Martínez et~al.}{2025}]{fuentes:2025}
\textsc{Fuentes-Martínez, R., I.~Crimaldi, and A.~Rungi} (2025): \enquote{Non-linear dependence and Granger causality: A vine copula approach,} \emph{arXiv:2409.15070}.

\bibitem[\protect\citeauthoryear{Galanos}{Galanos}{2025}]{rugarch}
\textsc{Galanos, A.} (2025): \emph{rugarch: Univariate GARCH models}, package version 1.5-4.

\bibitem[\protect\citeauthoryear{Galeano and Wied}{Galeano and Wied}{2017}]{galeano:2017}
\textsc{Galeano, P. and D.~Wied} (2017): \enquote{Dating Multiple Change Points in the Correlation Matrix,} \emph{Test}, 26, 331--352.

\bibitem[\protect\citeauthoryear{Gijbels, Omelka, Pešta, and Veraverbeke}{Gijbels et~al.}{2017}]{gijbels:2017}
\textsc{Gijbels, I., M.~Omelka, M.~Pešta, and N.~Veraverbeke} (2017): \enquote{Score tests for covariate effects in conditional copulas,} \emph{Journal of Multivariate Analysis}, 159, 111--133.

\bibitem[\protect\citeauthoryear{Gijbels, Omelka, and Veraverbeke}{Gijbels et~al.}{2015}]{gijbels2015estimation}
\textsc{Gijbels, I., M.~Omelka, and N.~Veraverbeke} (2015): \enquote{Estimation of a copula when a covariate affects only marginal distributions,} \emph{Scandinavian Journal of Statistics}, 42, 1109--1126.

\bibitem[\protect\citeauthoryear{Gijbels, Omelka, and Veraverbeke}{Gijbels et~al.}{2021}]{gijbels:2021}
---\hspace{-.1pt}---\hspace{-.1pt}--- (2021): \enquote{Omnibus test for covariate effects in conditional copula models,} \emph{Journal of Multivariate Analysis}, 186, 104804.

\bibitem[\protect\citeauthoryear{Glosten, Jagannathan, and Runkle}{Glosten et~al.}{1993}]{glosten1993relation}
\textsc{Glosten, L.~R., R.~Jagannathan, and D.~E. Runkle} (1993): \enquote{On the relation between the expected value and the volatility of the nominal excess return on stocks,} \emph{The Journal of Finance}, 48, 1779--1801.

\bibitem[\protect\citeauthoryear{Gon{\c{c}}alves, Hounyo, Patton, and Sheppard}{Gon{\c{c}}alves et~al.}{2023}]{gonccalves2023}
\textsc{Gon{\c{c}}alves, S., U.~Hounyo, A.~J. Patton, and K.~Sheppard} (2023): \enquote{Bootstrapping two-stage quasi-maximum likelihood estimators of time series models,} \emph{Journal of Business \& Economic Statistics}, 41, 683--694.

\bibitem[\protect\citeauthoryear{Han, Linton, Oka, and Whang}{Han et~al.}{2016}]{han:2016}
\textsc{Han, H., O.~Linton, T.~Oka, and Y.-J. Whang} (2016): \enquote{The cross-quantilogram: Measuring quantile dependence and testing directional predictability between time series,} \emph{Journal of Econometrics}, 193, 251--270.

\bibitem[\protect\citeauthoryear{Jang, Kim, and Noh}{Jang et~al.}{2024}]{jang:2024}
\textsc{Jang, H., J.-M. Kim, and H.~Noh} (2024): \enquote{Vine copula Granger causality in quantiles,} \emph{Applied Economics}, 56, 1109--1118.

\bibitem[\protect\citeauthoryear{Jobst, Möller, and Groß}{Jobst et~al.}{2024}]{jobst:2024}
\textsc{Jobst, D., A.~Möller, and J.~Groß} (2024): \enquote{Gradient-Boosted Generalized Linear Models for Conditional Vine Copulas,} \emph{arXiv:2406.13500}.

\bibitem[\protect\citeauthoryear{Kolev, Anjos, and Mendes}{Kolev et~al.}{2006}]{kolev2006copulas}
\textsc{Kolev, N., U.~d. Anjos, and B.~V. d.~M. Mendes} (2006): \enquote{Copulas: a review and recent developments,} \emph{Stochastic models}, 22, 617--660.

\bibitem[\protect\citeauthoryear{Lee and Yang}{Lee and Yang}{2014}]{lee:2014}
\textsc{Lee, T.-H. and W.~Yang} (2014): \enquote{Granger-causality in quantiles between financial markets: Using copula approach,} \emph{International Review of Financial Analysis}, 33, 70--78.

\bibitem[\protect\citeauthoryear{Longin and Solnik}{Longin and Solnik}{2001}]{longin2001extreme}
\textsc{Longin, F. and B.~Solnik} (2001): \enquote{Extreme correlation of international equity markets,} \emph{The Journal of Finance}, 56, 649--676.

\bibitem[\protect\citeauthoryear{Mayer and Wied}{Mayer and Wied}{2023}]{mw:2023}
\textsc{Mayer, A. and D.~Wied} (2023): \enquote{Estimation and inference in factor copula models with exogenous covariates,} \emph{Journal of Econometrics}, 235, 1500--1521.

\bibitem[\protect\citeauthoryear{Mayer, Wied, and Troster}{Mayer et~al.}{2025}]{mwt:25}
\textsc{Mayer, A., D.~Wied, and V.~Troster} (2025): \enquote{Quantile Granger Causality in the Presence of Instability,} \emph{Journal of Econometrics}, 249(B), 105992.

\bibitem[\protect\citeauthoryear{Mohr}{Mohr}{2020}]{mohr2020weak}
\textsc{Mohr, M.} (2020): \enquote{A weak convergence result for sequential empirical processes under weak dependence,} \emph{Stochastics}, 92, 140--164.

\bibitem[\protect\citeauthoryear{Nasri, R{\'e}millard, and Bahraoui}{Nasri et~al.}{2022}]{nasri2022change}
\textsc{Nasri, B.~R., B.~N. R{\'e}millard, and T.~Bahraoui} (2022): \enquote{Change-point problems for multivariate time series using pseudo-observations,} \emph{Journal of Multivariate Analysis}, 187, 104857.

\bibitem[\protect\citeauthoryear{Neumeyer, Omelka, and Hudecov{\'a}}{Neumeyer et~al.}{2019}]{neumeyer2019copula}
\textsc{Neumeyer, N., M.~Omelka, and {\v{S}}.~Hudecov{\'a}} (2019): \enquote{A copula approach for dependence modeling in multivariate nonparametric time series,} \emph{Journal of Multivariate Analysis}, 171, 139--162.

\bibitem[\protect\citeauthoryear{Newey and McFadden}{Newey and McFadden}{1994}]{newmc:94}
\textsc{Newey, W.~K. and D.~McFadden} (1994): \enquote{Large sample estimation and hypothesis testing,} in \emph{Handbook of Econometrics}, Elsevier, vol.~4, chap.~36, 2111--2245.

\bibitem[\protect\citeauthoryear{Oh and Patton}{Oh and Patton}{2013}]{patton:13}
\textsc{Oh, D.~H. and A.~J. Patton} (2013): \enquote{Simulated Method of Moments Estimation for Copula-Based Multivariate Models,} \emph{Journal of the American Statistical Association}, 108, 689--700.

\bibitem[\protect\citeauthoryear{Oh and Patton}{Oh and Patton}{2017}]{oh2017modeling}
---\hspace{-.1pt}---\hspace{-.1pt}--- (2017): \enquote{Modeling dependence in high dimensions with factor copulas,} \emph{Journal of Business \& Economic Statistics}, 35, 139--154.

\bibitem[\protect\citeauthoryear{Patton}{Patton}{2013}]{patton2013copula}
\textsc{Patton, A.} (2013): \enquote{Copula methods for forecasting multivariate time series,} \emph{Handbook of Economic Forecasting}, 2, 899--960.

\bibitem[\protect\citeauthoryear{Patton}{Patton}{2006}]{patton:06}
\textsc{Patton, A.~J.} (2006): \enquote{Modelling asymmetric exchange rate dependence,} \emph{International Economic Review}, 47, 527--556.

\bibitem[\protect\citeauthoryear{Patton}{Patton}{2012}]{patton:12}
---\hspace{-.1pt}---\hspace{-.1pt}--- (2012): \enquote{A review of copula models for economic time series,} \emph{Journal of Multivariate Analysis}, 110, 4--18.

\bibitem[\protect\citeauthoryear{R{\'e}millard}{R{\'e}millard}{2017}]{remillard2017goodness}
\textsc{R{\'e}millard, B.} (2017): \enquote{Goodness-of-fit tests for copulas of multivariate time series,} \emph{Econometrics}, 5, 13.

\bibitem[\protect\citeauthoryear{Rio}{Rio}{2017}]{rio2017asymptotic}
\textsc{Rio, E.} (2017): \emph{Asymptotic theory of weakly dependent random processes}, vol.~80, Springer.

\bibitem[\protect\citeauthoryear{Rossi}{Rossi}{2005}]{rossi:2005}
\textsc{Rossi, B.} (2005): \enquote{Optimal Tests for Nested Model Selection with Underlying Parameter Instability,} \emph{Econometric Theory}, 21, 962--990.

\bibitem[\protect\citeauthoryear{Rothe and Wied}{Rothe and Wied}{2013}]{rw:13}
\textsc{Rothe, C. and D.~Wied} (2013): \enquote{Misspecification Testing in a Class of Conditional Distributional Models,} \emph{Journal of the American Statistical Association}, 108, 314--324.

\bibitem[\protect\citeauthoryear{Runge, Bathiany, Bollt, Camps-Valls, Coumou, Deyle, Glymour, Kretschmer, Mahecha, Muñoz-Marí, van Nes, Peters, Quax, Reichstein, Scheffer, Schölkopf, Spirtes, Sugihara, Sun, Zhang, and Zscheischler}{Runge et~al.}{2019}]{Runge2019InferringCausation}
\textsc{Runge, J., S.~Bathiany, E.~Bollt, G.~Camps-Valls, D.~Coumou, E.~Deyle, C.~Glymour, M.~Kretschmer, M.~D. Mahecha, J.~Muñoz-Marí, E.~H. van Nes, J.~Peters, R.~Quax, M.~Reichstein, M.~Scheffer, B.~Schölkopf, P.~Spirtes, G.~Sugihara, J.~Sun, K.~Zhang, and J.~Zscheischler} (2019): \enquote{Inferring causation from time series in Earth system sciences,} \emph{Nature Communications}, 10, 2553.

\bibitem[\protect\citeauthoryear{Scholze and Steland}{Scholze and Steland}{2024}]{scholze2024weak}
\textsc{Scholze, F.~A. and A.~Steland} (2024): \enquote{On the Weak Convergence of the Function-Indexed Sequential Empirical Process and its Smoothed Analogue under Nonstationarity,} \emph{arXiv preprint arXiv:2412.01635}.

\bibitem[\protect\citeauthoryear{Segers}{Segers}{2012}]{segers2012asymptotics}
\textsc{Segers, J.} (2012): \enquote{Asymptotics of empirical copula processes under non-restrictive smoothness assumptions,} \emph{Bernoulli}, 764--782.

\bibitem[\protect\citeauthoryear{Song and Taamouti}{Song and Taamouti}{2021}]{songetal:21}
\textsc{Song, X. and A.~Taamouti} (2021): \enquote{Measuring Granger Causality in Quantiles,} \emph{Journal of Business \& Economic Statistics}, 39, 937--952.

\bibitem[\protect\citeauthoryear{Sowell}{Sowell}{1996}]{sowell:96}
\textsc{Sowell, F.} (1996): \enquote{Optimal Tests for Parameter Instability in the Generalized Method of Moments Framework,} \emph{Econometrica}, 1085--1107.

\bibitem[\protect\citeauthoryear{Spady and Stouli}{Spady and Stouli}{2025}]{spady:2025}
\textsc{Spady, R. and S.~Stouli} (2025): \enquote{Gaussian Transforms Modeling and the Estimation of Distributional Regression Functions,} \emph{Econometrica}, forthcoming.

\bibitem[\protect\citeauthoryear{Stark and Otto}{Stark and Otto}{2022}]{stark:22}
\textsc{Stark, F. and S.~Otto} (2022): \enquote{Testing and Dating Structural Changes in Copula-based Dependence Measures,} \emph{Journal of Applied Statistics}, 49, 1121--1139.

\bibitem[\protect\citeauthoryear{Troster}{Troster}{2018}]{troster:18}
\textsc{Troster, V.} (2018): \enquote{Testing for Granger-Causality in Quantiles,} \emph{Econometric Reviews}, 37, 850--866.

\bibitem[\protect\citeauthoryear{van~der Vaart}{van~der Vaart}{1994}]{van1994weak}
\textsc{van~der Vaart, A.} (1994): \enquote{Weak convergence of smoothed empirical processes,} \emph{Scandinavian Journal of Statistics}, 501--504.

\bibitem[\protect\citeauthoryear{Wang, Oka, and Zhu}{Wang et~al.}{2023}]{wang:2023}
\textsc{Wang, Y., T.~Oka, and D.~Zhu} (2023): \enquote{Bivariate Distribution Regression with Application to Insurance Data,} \emph{Insurance: Mathematics and Economics}, 113, 215--232.

\bibitem[\protect\citeauthoryear{Wied}{Wied}{2024}]{wied:2024}
\textsc{Wied, D.} (2024): \enquote{Semiparametric Distribution Regression with Instruments and Monotonicity,} \emph{Labour Economics}, 90, 102565.

\bibitem[\protect\citeauthoryear{Wied, Kr{\"a}mer, and Dehling}{Wied et~al.}{2012}]{wied:2012}
\textsc{Wied, D., W.~Kr{\"a}mer, and H.~Dehling} (2012): \enquote{Testing for a change in correlation at an unknown point in time using an extended functional delta method,} \emph{Econometric Theory}, 28, 570--589.

\end{thebibliography}

\newpage\clearpage
\begin{appendices}

\renewcommand{\theequation}{\Alph{section}.\arabic{equation}}
\renewcommand{\thelemma}{\Alph{section}.\arabic{lemma}}
\renewcommand{\theremark}{\Alph{section}.\arabic{remark}}
\setcounter{equation}{0}
\setcounter{lemma}{0}

\section{Proof of Lemma 1} 

 Decompose the $(k+1) \times 1$ process $\tilde{\mathbb H}^\circ_n = (\tilde{\mathbb C}^\circ_n, \tilde{\mathbb G}_n^{\circ \top})^\top$ according to $X_t = (1,\ Z_t^\top)^\top$.
For simplicity, we provide the proof  for $\tilde{\mathbb G}^\circ_n$; the conclusions for $\tilde{\mathbb C}^\circ_n$ follow as a special case with $Z = 1$.

\subsection{Proof of Lemma 1 (1)} 
Recall the definition from Assumption~\ref{ass:marginal}:
$$
a_{i,1}({\cal F}_{t-1},\lambda) \coloneqq \frac{\mu_i({\cal F}_{t-1},\lambda)-\mu_i({\cal F}_{t-1},\lambda_0)}{\sigma_i({\cal F}_{t-1},\lambda_0)}, \quad a_{i,2}({\cal F}_{t-1},\lambda) \coloneqq \frac{\sigma_i({\cal F}_{t-1},\lambda)}{\sigma_i({\cal F}_{t-1},\lambda_0)}
$$
for each $i \in \{1,2\}$. It follows that $a_{i,1}({\cal F}_{t-1},\lambda_0) = 0$ and $a_{i,2}({\cal F}_{t-1},\lambda_0) = 1$. Let 
$$T_{i,t}(u_i,\lambda) \coloneqq {\sf F}_i^{-1}(u_i)a_{i,2}({\cal F}_{t-1},\lambda)+a_{i,1}({\cal F}_{t-1},\lambda)$$ and note that for $\lambda_n(h) \coloneqq \lambda_0 + h/\sqrt{n}$, $h \in H \subset {\mathbb R}^d$,
\begin{align*}
\tilde {\mathbb G}^\circ_n(u,s;h) =    \frac1{\sqrt n}\sum_{t=1}^{\floor{sn}}(Z_t\bI\{\eps_{1,t} \leq T_{1,t}(u_1,&\lambda_n(h)),\eps_{2,t} \leq \ T_{2,t}(u_2,\lambda_n(h))\}\\
\,&-\Ex[Z_1{\sf F}\{T_{1,1}(u_1,\lambda_n(h)),T_{2,1}(u_2,\lambda_n(h))\}]).
\end{align*}
Define $$T^\dagger_{i,t}(u_{i},\lambda) \coloneqq {\sf F}_i^{-1}(u_i)a^\dagger_{i,2}({R}_{t},\lambda)+a_{i,1}^\dagger(R_{t},\lambda),$$ with $a^\dagger_{i,1}(R_t,\lambda)\coloneqq \dot{a}_{i,1}(R_t)^\top(\lambda-\lambda_0)$, and $a^\dagger_{i,2}(R_t,\lambda) \coloneqq 1+\dot a_{i,2}(R_t)^\top(\lambda-\lambda_0)$. Let 
\begin{align*}
\tilde {\mathbb G}^{\circ \dagger}_n(u,s;h) :=    \frac1{\sqrt n}\sum_{t=1}^{\floor{sn}}(Z_t\bI\{\eps_{1,t} \leq T_{1,t}^\dagger(u_1,&\lambda_n(h)),\eps_{2,t} \leq \ T^\dagger_{2,t}(u_2,\lambda_n(h))\}\\
\,&-\Ex[Z_1{\sf F}\{T^\dagger_{1,1}(u_1,\lambda_n(h)),T^\dagger_{2,1}(u_2,\lambda_n(h))\}]).
\end{align*}
Recall the definition $E = [0,1]^2 \times [0,1] \times H$. We will show that
\begin{align}\label{eq:Gndiff}
\ssup\limits_{(u,s,h) \in E} |\tilde{\mathbb G}^{\circ \dagger}_n(u,s;h) -\tilde{\mathbb G}_n^\circ(u,s;h)| = o_p(1) 
\end{align}
and that $\tilde{\mathbb G}^{\circ \dagger}_n$ is stochastically equicontinuous.

First, note that 
\begin{align}\label{eq:Gdecomp}
    \tilde{\mathbb G}^{\circ \dagger}_n(u,s;h) -\tilde{\mathbb G}_n^\circ(u,s;h) = X_n(u,s,h)-\bar X_n(u,s,h),
 \end{align}  
 where
 \begin{align*}
X_n(u,s,h) \coloneqq \frac1{\sqrt n}\sum_{t=1}^{\floor{sn}}Z_t(&\bI\{\eps_{1,t} \leq T_{1,t}^\dagger(u_1,\lambda_n(h)),\eps_{2,t} \leq \ T^\dagger_{2,t}(u_2,\lambda_n(h))\}\\
&-\bI\{\eps_{1,t} \leq T_{1,t}(u_1,\lambda_n(h)),\eps_{2,t} \leq \ T_{2,t}(u_2,\lambda_n(h))\})
 \end{align*}  
 and
 \begin{align*}
\bar X_n(u,s,h) \coloneqq
 s_n\sqrt{n}(\Ex[Z_1(&{\sf F}\{T^\dagger_{1,1}(u_1,\lambda_n(h)),T^\dagger_{2,1}(u_2,\lambda_n(h))\}\\ &-{\sf F}\{T_{1,1}(u_1,\lambda_n(h)),T_{2,1}(u_2,\lambda_n(h))\})]).
 \end{align*}
Under Assumption 5, we can show that, 
for some  
$\tilde\lambda_n(h)$ between $\lambda_n(h) = \lambda_0+h/\sqrt{n}$ and $\lambda_0$,  
$$
a_{i,j}({\cal F}_{t-1},\lambda_n(h))=a_{i,j}({\cal F}_{t-1},\lambda_0)+\dot{a}_{i,j}(R_t)^\top \frac{h}{\sqrt n}+\frac1{2n} h^\top \nabla^2_\lambda a_{i,j}({\cal F}_{t-1},\tilde\lambda_n(h))h,
$$
We define $\Delta_n \coloneqq \mmax\limits_{1 \leq i,j \leq 2}\mmax\limits_{1 \le t \leq n}\ssup\limits_{h \in H}|a_{i,j}({\cal F}_{t-1},\lambda_n(h))-a^\dagger_{i,j}(R_t,\lambda_n(h))|,$
and note that
\begin{align*}
\Delta_n \leq \,& \frac1{2n}\ssup\limits_{h \in H}|h|^2 \mmax\limits_{1 \leq i,j, \leq 2} \mmax\limits_{t \leq n} \ssup\limits_{h \in H} |\nabla^2_\lambda a_{i,j}({\cal F}_{t-1},\lambda_n(h))| \\
\leq \,& \frac1{2n}\ssup\limits_{h \in H}|h|^2\mmax\limits_{t \leq n} \ddot{a}_t = O_p(n^{-1+1/q})=o_p(n^{-1/2}),
\end{align*}
where we use that by Assumption~\ref{ass:marginal} and Markov's inequality,  $\mmax\limits_{1 \leq t \leq n} \ddot{a}_t = O_p(n^{1/q})$ for $q > 2$. Hence, for $0\leq \gamma_n = o(n^{-1/2})$, $\mathbb{P}(\Delta_n \leq \gamma_n) \rightarrow 1$ and, on $\{\Delta_n \leq \gamma_n\}$, we may write for any $h \in {H}$
$$
a_{i,j}^\dagger(R_t,\lambda_n(h)) - \gamma_n \leq a_{i,j}({\cal F}_{t-1},\lambda_n(h)) \leq a_{i,j}^\dagger(R_t,\lambda_n(h)) + \gamma_n, \qquad 1 \leq i,j \leq 2,
$$
and, using the monotonicity of the indicator function
$$
\bI\{\eps_{i,t}\leq T^\dagger_{i}(u,\lambda,-\gamma_n;R_t)\} \leq \bI\{\eps_{i,t}\leq T_{i,t}(u,\lambda)\} \leq \bI\{\eps_{i,t}\leq T^\dagger_{i}(u,\lambda,\gamma_n;R_t)\},
$$
$i \in \{1,2\}$, where, akin to \citet[Appendix A.2]{neumeyer2019copula}, we define
\begin{align}
T^\dagger_{i}(u,\lambda,\gamma;R) \coloneqq {\sf F}_i^{-1}(u_i)(a^\dagger_{i,2}({R},\lambda)+\gamma{\sf sign}({\sf F}_i^{-1}(u_i)))+a_{i,1}^\dagger(R,\lambda)+\gamma.
\end{align}
Therefore, on the event $\{\Delta_n \leq \gamma_n\}$, we have 
$$
|X_n(u,s,h)| \leq \frac1{\sqrt n}\sum_{t=1}^{\floor{sn}} (g^{(n)}_{u,h,\gamma_n}(\xi_t)-g^{(n)}_{u,h,-\gamma_n}(\xi_t)),
$$
where $g_{u,h,\gamma}^{(n)}(\xi) = |Z|f_{u,h,\gamma}^{(n)}(\eps,R) \in {\cal G}^{(n)}$,
\begin{align}\label{eq:Gnclass}
{\cal G}^{(n)} \coloneqq   \{\xi = (\eps,R,Z) \mapsto |Z|f^{(n)}(\eps,R): f^{(n)} \in {\cal H}^{(n)},\ i\in \{1,2\}\}
\end{align}
for ${\cal H}^{(n)} \coloneqq {\cal H}_1^{(n)} {\cal H}_2^{(n)}$ and
\begin{align*}
{\cal H}_{i}^{(n)} \coloneqq  \{(\eps,R) \mapsto \bI\{\eps_{i} \leq T_{i}^\dagger(u,\lambda_n(h),\gamma;R)\}:\ & 
(u,h) \in [0,1] \times H, 
\ & |\gamma| \leq \gamma_0, \ \gamma_0 \in (0,\infty)\}.
\end{align*}
Applying a similar argument, we can show that, on $\{\Delta_n \leq \gamma_n\}$,
$$
|\bar X_n(u,s,h)| \leq \sqrt{n} \Ex[g^{(n)}_{u,h,\gamma_n}(\xi_1)-g^{(n)}_{u,h,-\gamma_n}(\xi_1)],
$$
so that we can bound Eq.~\eqref{eq:Gndiff} via
\begin{equation}\label{eq:Gdecomp}
\begin{split}
|\tilde{\mathbb G}^{\circ \dagger}_n(u,s;h) -\tilde{\mathbb G}_n^\circ(u,s;h)|  \leq \,& |\mathbb{Z}_n(s,g^{(n)}_{u,h,\gamma_n})-\mathbb {Z}_n(s,g^{(n)}_{u,h,-\gamma_n})|\\
\,&+2\sqrt{n} \Ex[g^{(n)}_{u,h,\gamma_n}(\xi_1)-g^{(n)}_{u,h,-\gamma_n}(\xi_1)].
 \end{split}   
\end{equation}
where
$$
\mathbb{Z}_n(s,g) \coloneqq \frac1{\sqrt n}\sum_{t=1}^{\floor{sn}}(g(\xi_t)-\Ex[g(\xi_1)]).
$$
We verify Eq.~\eqref{eq:Gndiff}, in two steps by showing that the two terms on the right-hand side of Eq.~\eqref{eq:Gdecomp} vanish uniformly. In particular, we show first that, for the semimetric $\tau((s,f),(t,g)) \coloneqq |s-t|+\rho_r(f,g)$, $\rho_r(f,g) \coloneqq \|f(\xi_1)-g(\xi_1)\|_r$, and some $r\geq 2$, the empirical process $\mathbb{Z}_n(s,g)$, indexed by $(s,g) \in [0,1]\times {\cal G}$, is stochastically equicontinuous
\begin{align}\label{eq:SE}
  \forall \eps>0:\;  \llim\limits_{\delta \downarrow 0}\llimsup\limits_{n\rightarrow \infty}{\mathbb P}\left[\ssup\limits_{f,g \in {\cal G}:\ \tau((s,f),(t,g))\leq \delta} |{\mathbb Z}_n(s,f)-{\mathbb Z}_n(t,g)|>\eps\right] = 0,
\end{align}
where ${\cal G}  \supseteq {\cal G}^{(n)}$ is defined like ${\cal G}^{(n)}$ but with $\lambda$ ranging over $\{\lambda: |\lambda-\lambda_0| \leq \ssup\limits_{h\in H}|h|\}$. Second,  we show that 
\begin{align}\label{eq:SE1}
\delta_n \rightarrow 0,  \quad \delta_n \coloneqq \ssup_{|\gamma|\leq \gamma_n}\ssup_{u,h} 
\rho_r(g^{(n)}_{u,h,\gamma},g^{(n)}_{u,h,0}).
\end{align}
Taken together, this implies
\begin{align}\label{eq:SEGgamma}
\ssup\limits_{|\gamma| \leq \gamma_n} \ssup\limits_{(u,s,h) \in T} | \mathbb {Z}^{(n)}_n(s,g_{u,h,\gamma})-\mathbb {Z}_n(s,g^{(n)}_{u,h,0})| = o_p(1), \quad g^{(n)}_{u,h,\gamma} \in {\cal G}^{(n)},
\end{align}
and $\sqrt{n} \Ex[g^{(n)}_{u,h,\gamma_n}(\xi_1)-g^{(n)}_{u,h,0}(\xi_1)] = o(1)$. Thus, in view of Eq.~\eqref{eq:Gdecomp}, Eq.~\eqref{eq:Gndiff} is proven. Finally, the stochastic equicontinuity of $\mathbb{Z}_n(s,g)$, $(s,g) \in [0,1]\times {\cal G}^{(n)}$ implies that of $\tilde {\mathbb G}_n^{\circ \dagger}(u,s;h) = \mathbb {Z}_n(s,\tilde g^{(n)}_{u,h,0})$, $\tilde g^{(n)}_{u,h,\gamma} = Zf^{(n)}_{u,h,\gamma}$, $f^{(n)}_{u,h,\gamma} \in {\cal H}^{(n)}$. It thus remains to prove (1) Eq.~\eqref{eq:SE} and (2) Eq.~\eqref{eq:SE1}. 

For (1), we apply \citet[Cor 2.7]{mohr2020weak} (see also \citealp[Cor 1]{scholze2024weak}) which is an extension of \cite{andrews1994introduction} to the sequential case and allowing for unbounded classes of functions. We start with the following definition:

\begin{definition}[Bracketing Number] For all $\eps>0$, let $N = N(\eps)$ be the smallest integer, for which there exist a class of functions ${\cal X} \longrightarrow {\mathbb R}$, denoted by ${\cal B}$ and called bounding class, and a function class ${\cal A} \subset {\cal F}$, called approximating class, such that $|{\cal A}|=|{\cal B}|=N$, $\rho(b)<\eps$, $\forall b \in {\cal B}$ and for all $f \in {\cal F}$, there exists $(a,b) \in {\cal A} \times {\cal B}$ such that $|f-a|\leq b$. Then $N=N_{[\ ]}(\eps,{\cal F},\rho)$ is called the bracketing number.    
\end{definition}

By \citet[Cor 2.7]{mohr2020weak} (see also \citealp[Cor 1]{scholze2024weak}), Eq.~\eqref{eq:SE} holds with $\rho_{\nu(2+\lambda)/2}(g) = \|g(\xi_1)\|_{\nu(2+\lambda)/2}$, if there are $\lambda>0$ and $\nu \geq 2$, $\nu \in {\mathbb N}$, such that: 

\begin{itemize}
    \item[\bf SE1] There exists $\lambda>0$ and an integer $\nu\geq 2$ such that $\sum_{s=1}^\infty s^{\nu-2}\alpha(s)^{\frac{\lambda}{2+\lambda}} < \infty$.
    \item[\bf SE2]  For $\rho_2 \coloneqq \|f(\xi_1)\|_2$ and $(\nu,\lambda)$ from SE1
    $$\int_0^1 \eps^{-\frac{\lambda}{2+\lambda}} N^{\frac{1}{\nu}}_{[\ ]}(\eps,{\cal G},\rho_2) \d \eps < \infty,$$   for each $\eps>0$
    $$
    \forall b\in{\cal B}:\;\Ex[|b(\xi_1)|^{l\frac{2+\lambda}{2}}]^{\frac1{2}}\leq \eps, \quad l \in \{2,\dots,\nu\},
    $$
    and for some $c \in [1,\infty)$
    $$
    \ssup\limits_{g\in {\cal G}}\Ex[|g(\xi_1)|^{\nu\frac{2+\lambda}{2}}] \leq c.
    $$  
\end{itemize}

Without of loss of generality, assume $Z$ to be a scalar and, for completeness, define class ${\cal G} = |Z|\cal{H}$, ${\cal H} = {\cal H}_1{\cal H}_2$, with
\begin{align*}
{\cal H}_{i} \coloneqq  \{(\eps,R) \mapsto \bI\{\eps_{i} \leq T_{i}^\dagger(u,\lambda,\gamma;R)\}:\ & u \in [0,1],\ \lambda \in \{l: |l-\lambda_0| \leq \ssup\limits_{h\in H}|h|\}, \\
\ & |\gamma| \leq \gamma_0, \ \gamma_0 \in (0,\infty)\}.
\end{align*}
Clearly, ${\cal G}^{n} \subset {\cal G}$. Now, note that $N_{[\ ]}(\eps,{\cal H}, \rho_2) = O(\eps^{-2(d+3)})$. This can be shown using similar arguments as those used in the proof of \citet[Lemma S.1 ($a$)]{mw:2023}.  Hence, for any $\eta > 0$, there exists functions $(a_r,b_r)_{r=1}^{N}$ such that for any $h\in {\cal H}$ there exists an $r$ such that $|h-a_r|\leq b_r$, where $\rho(b_r)\leq \eta$ for any $r \in \{1,\dots, N\}$, $N \coloneqq N_{[\ ]}(\eps,{\cal H}, \rho_2)$. Because ${\cal H}$ is an indicator class of functions, we have $b_r \in [0,1]$. Now, define $A_r = Z a_r$, $B_r = |Z| b_r$. Then for any $h \in {\cal H}$ satisfying $|h-a_r|\leq b_r$ we have $|Zh-A_r|\leq B_r$.  Now, for $p>2$, $q = p/(p-2)$ we get for $\eta = (\eps/\|Z_1\|_{p})^q$:
$$
\rho(B_r) \leq \|Z_{1}\|_{p} \|b_r(\xi_1)\|_{2q} \leq \|Z_{1}\|_{p} \|b_r(\xi_1)\|_{2}^{\frac1{q}} \leq \|Z_1\|_{p} \eta^{\frac1{q}} = \eps,
$$
where the penultimate inequality uses $b_r \in [0,1]$. This hows that for $N_1 = N_{[\ ]}(\eps,{\cal G},\rho) = O(\eps^{-2(d+3)q})$, $(A_r,B_r)_{r=1}^{N_1}$ are valid brackets for ${\cal G}$, with $\rho(B_r)\leq \eps$ for any $r \in \{1,\dots, N_1\}$. Clearly, ${\bf SE1}$ holds for any $(\nu,\lambda)$. Moreover, we know that if $\|Z_1\|_p$, $p>2$, then $N_{[\ ]}(\eps,{\cal G},\rho) = O(\eps^{-v})$, for $v \coloneqq aq$, $a\coloneqq 2(d+3)$ and $q \coloneqq p/(p-2)$. The entropy integral in {\bf SE2} is finite if $\lambda/(\lambda+2)+v/\nu < 1$, which, as $\lambda \downarrow 0$, reduces to $\nu > v$. As $|g(\xi)| \leq |Z|$ for all $g \in {\cal G}$, the moment condition in {\bf SE2} is satisfied if $\Ex[|Z_1|^{\nu(2+\lambda)/2}]<\infty$, or $\nu(2+\lambda)\leq 2p$. Now, assume $p>a+2$ and choose an integer $\nu$ such that $v < \nu \leq p$. Since $\nu > v$, we can pick $\lambda>0$ small such that $\lambda/(\lambda+2)+v/\nu < 1$ holds and, at the same time, $\nu(2+\lambda)\leq 2p$. This proves Eq.~\eqref{eq:SE}, i.e. stochastic equicontinuity of ${\mathbb Z}(s,g)$ indexed by $(s,g) \in [0,1]\times {\cal G}$.

Next, we show Eq.~\eqref{eq:SE1} for $r = \nu(\lambda+2)/2$, as chosen previously. Here, we first verify that
there exist constants $\gamma_0 \in (0,1/2)$, $c\in (0,\infty)$ and an event $\Omega_n$ with $\mathbb{P}(\Omega_n) \rightarrow 1$ such that on $\Omega_n$ with $|\gamma| \leq \gamma_0$:
\[
\mmax\limits_{1 \leq i \leq 2}\mmax\limits_{t\leq n}\ssup\limits_{(u,h) \in [0,1] \times H}|{\sf F}_i(T_{i}^\dagger(u,\lambda_n(h),\gamma;R_t))-{\sf F}_i(T_{i}^\dagger(u,\lambda_n(h),0;R_t))| \leq c|\gamma|,
\]
for some $c \in (0,\infty)$. To see this, consider $(y,v,w,s) \in {\mathbb T} \coloneqq  {\mathbb R} \times [1/2,3/2] \times [-1,1] \times \{-1,1\}$. There exists $x = x(y,v,w,\gamma)$ between $yv+w$ and $y(v+s\gamma)+w+\gamma$ such that
\begin{align*}  
{\sf F}_i (y(v+s&\gamma)+w+\gamma) -{\sf F}_i(yv+w) = \gamma f_i(x)(1+ys), \quad (y,v,w,s) \in {\mathbb T}.
\end{align*}
Moreover, for some $\gamma_0 \in (0,1/2)$, there exist $c_0,c \in (0,\infty)$ such that
\[
\ssup\limits_{|\gamma| < \gamma_0}\ssup\limits_{(y,v,w) \in  {\mathbb T}}f_i(x)(1+|y|) \leq  c_0 \ssup\limits_{z \in \mathbb R}f_i(z)(1+|z|) \leq c.
\]
To see this, note that for some $\theta \in (0,1)$, $x = y(v+\theta s\gamma)+(w+\theta \gamma)$. Now, by the reverse triangle inequality, $|v+\theta s\gamma| \geq v - |\gamma| \geq 1/2-\gamma_0$ and $|w+\theta\gamma| \leq 1+\gamma_0$. Therefore, $|x| \geq |\frac1{2}-\gamma_0||y|-(1+\gamma_0)$, or $1+|y| \leq c_0(1+|x|)$, for $c_0 \coloneqq 2(2+\gamma_0)/(1-2\gamma_0)$ and all $y \in {\mathbb R}$. The final inequality is due to Assumption~\ref{ass:epsdist}. Next, since $\Ex[|\dot a_{i,j}(R_1)|^q]<\infty$, with $q > 2$, Assumption~\ref{ass:marginal} yields for $i \in \{1,2\}$:
$$\mmax\limits_{t\leq n}\ssup\limits_{h \in H}|a_{i,1}^\dagger(R_t,\lambda_n(h))| = o_p(1), \quad \mmax\limits_{t\leq n}\ssup\limits_{h \in H}|a_{i,2}^\dagger(R_t,\lambda_n(h))-1| = o_p(1).$$
In particular, $\mathbb{P}(\Omega_n) \rightarrow 1$, where 
\begin{align}\label{eq:Omegan}
 \Omega_n \coloneqq \Omega_{1,n} \cap \Omega_{2,n},\quad\Omega_{1,n} \coloneqq \{\mmax\limits_{t\leq n, 1\leq i \leq 2}\ssup\limits_{h \in H}|a_{i,1}^\dagger(R_t,\lambda_n(h))| \leq 1\},
\end{align}
and
$$
 \Omega_{2,n} \coloneqq \{\mmax\limits_{t\leq n,1\leq i \leq 2}\ssup\limits_{h \in H}|a_{i,2}^\dagger(R_t,\lambda_n(h))-1| \leq \frac1{2}\}.
$$
Therefore, with $f^{(n)}_{u,h,\gamma}  \in \mathcal{H}^{(n)}$, we get for $|\gamma| \leq \gamma_0$ and some $c \in (0, \infty)$
\[
\ssup\limits_{(u,h) \in [0,1] \times H}\mathbb{P}(\{|f^{(n)}_{u,h,\gamma}-f^{(n)}_{u,h,0}|>0\}, \Omega_n) \leq c|\gamma|.
\]
Now, for $\gamma\geq 0$ and any $0<m \leq \nu(2+\lambda)/2$, we have
$$
\Ex[|Z_t|^m |f^{(n)}_{u,h,\gamma}-f^{(n)}_{u,h,0}|^m]=\Ex[|Z_t|^m \Ex[f^{(n)}_{u,h,\gamma}-f^{(n)}_{u,h,0}\mid {\cal F}_{t-1}]],
$$
using the monotonicity of the indicator and that  $\Ex[|Z|^m]<\infty$ because $\nu(2+\lambda)/2 \leq p$, with $p$ from Assumption~\ref{ass:Z}. Moreover, from the previous discussion,
$$
\Ex[|Z_t|^m \Ex[f^{(n)}_{u,h,\gamma_n}-f^{(n)}_{u,h,0}\mid {\cal F}_{t-1}]] \leq c|\gamma_n|\Ex[|Z_t|^m]+\Ex[|Z_t|^m\bI\{\Omega_n^c\}].
$$
Thus, $\mathbb{P}(\Omega_n^c)\rightarrow 0$ and the dominated convergence theorem imply $\Ex[|Z_t|^m\bI\{\Omega_n^c\}] \rightarrow 0$. Thus, $\rho_m(g_{u,h,\gamma}^{(n)},g_{u,h,0}^{(n)}) \leq c|\gamma|^{1/m}+o(1)$. Letting $\gamma=\gamma_n \rightarrow 0$ finishes the proof. \hfill $\square$

\subsection{Proof of Lemma 1 (2)} 
Under the null, $\mathbb{P}\{U_t \leq u \mid {\cal F}_{t-1}\} = {\sf C}(u)$; i.e., $\bI\{U_t \leq u\}-{\sf C}(u)$ is a MDS wrt ${\cal F}_t$. Since $\tilde{\mathbb C}_n(u,s;\lambda_0) = \frac1{\sqrt n}\sum_{t=1}^{\floor{sn}}(\bI\{U_t \leq u\}-{\sf C}(u))$ and $$\Ex[\tilde G_n(u,s;\lambda_0)] = \Ex[Z_t{\sf F}(T_{1,t}(u_1,\lambda_0),T_{2,t}(u_2,\lambda_0))] = \mu {\sf C}(u),$$ we can write
\begin{align*} 
\tilde{\mathbb G}^\circ_n(u,s;\lambda_0)=\tilde{\mathbb G}_n(u,s;\lambda_0) =  \mu\tilde{\mathbb C}_n(u,s;\lambda_0) + \tilde {\mathbb B}_{n,Z}(u,s),
\end{align*}
where $\tilde {\mathbb B}_{n,Z}(u,s) \coloneqq \frac1{\sqrt n}\sum_{t=1}^{\floor{sn}}(Z_t-\mu)\bI\{U_t \leq u\}$. By Lemma~\ref{lem:seqtilde} ($1$), this is a sum of two tight sequential empirical processes.

Moreover, for a given $(s,u) \in [0,1] \times [0,1]^2$, $\tilde{\mathbb C}_n(u,s;\lambda_0)$ is a sum of MDS, while $\tilde {\mathbb B}_{n,Z}(u,s)$ is a sum of strictly stationary and strongly mixing processes. By the continuous mapping theorem, the claim follows from the fidi-convergence of the process $(\tilde{\mathbb C}_n(s_r,u_r),\tilde{\mathbb B}_{Z,n}(s_r,u_r))_{r=1}^m$, for a fix $m \in {\mathbb N}$, to a tight Gaussian random vector. This is done using the Cramér–Wold device and the CLT for arrays of strongly mixing processes, applied to
$
\sum_{r=1}^m a_r\tilde{\mathbb C}_n(s_r,u_r) + b_r^\top \tilde{\mathbb B}_{Z,n}(s_r,u_r) = \frac1{\sqrt n}\sum_{t=1}^n Y_{t,n},
$
where
$$
Y_{t,n} \coloneqq \sum_{r=1}^m \bI\{t \leq \floor{s_rn}\}(a_r(\bI\{U_t \leq u_r\}-{\sf C}(u_r)) + b_r^\top(Z_t-\mu)\bI\{U_t \leq u_r\}),
$$
with $(a_r,b_r,s_r,u_r) \in {\mathbb R} \times {\mathbb R}^d \times [0,1] \times [0,1]^2$. Since, for some $c_1,c_2 \in (0,\infty)$, $|Y_{t,n}| \leq c_1+c_2|Z_t-\mu|$, it follows from Assumption~\ref{ass:Z} that $\ssup_{t,n}\Ex[|Y_{t,n}|^{2+\delta}]<\infty$ for some $\delta>0$. Also, by independence of $U_t$ form ${\cal F}_{t-1}$, $Y_{t,n}$ inherits the (geometric) mixing rate from $(Z_t,U_t)$. Hence, \citet[Thm 4.4]{rio2017asymptotic} applies and 
$\frac1{\sqrt n}\sum_{t=1}^n Y_{t,n} \rightarrow_d \mathcal{N}(0,\sigma^2)$, where $\sigma^2 = \sum_{q,r=1}^{m} (s_r \wedge s_q) \sigma_{rq},$
with
\[
\sigma_{rq} = a_ra_q \sigma_C(u_r,u_q)+a_r \sigma_{CZ}(u_r,u_q)b_q + b_r^\top \sigma_{CZ}(u_r,u_q)^\top a_q  + b_r^\top \sigma_{Z}(u_r,u_q) b_q   \geq 0.
\]
This proves the claim. \hfill $\square$
\subsection{Proof of Lemma 1 (3)} 
By the proof of Part 1, it suffices to show that 
$$
\ssup\limits_{(u,s,\lambda) \in T   } |{\mathbb G}^{\dagger}_n(u,s,\lambda_0+h/\sqrt n)-{\mathbb G}^\dagger_n(u,s,\lambda_0)-s\Psi^\top(u)h| = o_p(1).
$$
Define $$
g(u,\lambda; Z,R) \coloneqq Z {\sf F}(T^\dagger_{1}(u_1,\lambda;R),T^\dagger_{2}(u_2,\lambda;R)), \quad  m(u,\lambda) \coloneqq \Ex[g(u,\lambda; Z_1,R_1)]. 
$$   
Note that, by the law of iterated expectations and $\eps_t \perp {\cal F}_{t-1}$,
$${\mathbb G}_n^\dagger(u,s;\lambda)  - {\mathbb G}_n^{\circ \dagger}(u,s;\lambda)= \sqrt{n}s_n(m(u,\lambda)-\mu {\sf C}(u)),$$ while, by Part 1, we have $
 \ssup_{u,s,h} | {\mathbb G}_n^{\circ \dagger}(u,s;\lambda_0+h/\sqrt n)  - {\mathbb G}_n^{\circ \dagger}(u,s;\lambda_0) | = o_p(1).$
Moreover, $m(u,\lambda_0) =\mu {\sf C}(u)$, so that,   
 $$
 \mathbb{G}^\dagger_n(u,s;\lambda_0+h/\sqrt n)-\mathbb{G}^\dagger_n(u,s;\lambda_0)  =  s\sqrt{n}(m(u,\lambda_0+h/\sqrt n) -m(u,\lambda_0)) + o_p(1),
 $$
uniformly in $(u,s,h) \in T$. Hence, it suffices to show the deterministic expansion
$$
\ssup\limits_{(u,h) \in [0,1]^2\times H} |\sqrt{n}(m(u,\lambda_0+h/\sqrt n)-m(u,\lambda_0))-\Psi^\top h|=o(1).
$$
We do so in two steps 1) $\nabla m(u,\lambda_0) = \Psi(u)$ and 2) apply a second-order Taylor expansion. 

Step 1) and 2) make use of the following result that is proven below: On the sequence of events $\Omega_n$ defined in Eq.~\eqref{eq:Omegan}, it holds:
\begin{align}\label{eq:nmbound}
    \ssup\limits_{(u,h) \in[0,1]^2 \in H} \mmax\{\|\nabla_\lambda g(u;\lambda_n(h),Z,R)\|, \|\nabla^2_\lambda g(u;\lambda_n(h),Z,R)\|\} \leq c\dot g(Z,R),
\end{align}
with $c \in (0,\infty)$ and  $\dot g(Z,R) \coloneqq |Z|(1+\mmax\limits_{1\leq i,j \leq 2}|\dot a_{i,j}(R)|^2)$, $\Ex[\dot g(Z_1,R_1)] < \infty$.

{\it Step 1}. 
By the chain-rule
\begin{align*} 
\nabla_\lambda g(u;\lambda,Z,R) =  Z\sum_{k \in \{1,2\}}\partial_k {\sf F}\{T^\dagger_{1}(u_1,\lambda;R),T^\dagger_{2}(u_2,\lambda;R)\} \xi_k(u_k,R). 
\end{align*}
At $\lambda=\lambda_0$, we have $T^\dagger_{k}(u_k,\lambda;R) = {\sf F}_k^{-1}(u_k)$, $\partial_k {\sf F}\{T^\dagger_{1}(u_1,\lambda;R),T^\dagger_{2}(u_2,\lambda;R)\} = \partial_k {\sf C}(u)f_k({\sf F}_k^{-1}(u_k))$. From
$
\xi_k(u,R) =  {\sf F}_k^{-1}(u_k) \dot a_{k,2}(R)+ \dot a_{k,1}(R)
$
we get
$$
\nabla_\lambda g(u;\lambda_0,Z,R)   = \sum_{k \in \{1,2\}}\partial_k {\sf C}(u)f_k({\sf F}_k^{-1}(u_k)) Z\xi_k(u_k,R).
$$
If $\nabla_\lambda m(u,\lambda) = \Ex[\nabla_\lambda g(u,\lambda;Z_1,R_1)]$, i.e if  we can justify differentiation under the expectation, then $\nabla_\lambda m(u,\lambda_0) = \Psi(u)$. This follows, however, from Eq.~\eqref{eq:nmbound} and the dominated convergence theorem.

{\it Step 2}. By a second-order Taylor expansion and Eq.~\eqref{eq:nmbound}, it holds uniformly
\begin{align*}  
  \sqrt{n}(m(u,\lambda_n(h)) -m(u,\lambda_0)) =   \Psi(u)h + \frac{1}{2\sqrt n}  h^\top (\nabla^2 m(u,\tilde\lambda)) h  = \Psi(u)h + o_p(1),
\end{align*}  
for some $\tilde\lambda$ between $\lambda_0$ and $\lambda_0+h/\sqrt n$.  

Proof of Eq.~\eqref{eq:nmbound}. First, note that on $\Omega_n$ we have $|a_{k,1}^\dagger(R,\lambda)|\leq 1$ and $|a_{k,2}^\dagger(R,\lambda)| \in [1/2,3/2]$. So, with $y_k \coloneqq {\sf F}^{-1}_k(u_k)$ the reverse triangle inequality yields:
$$
|T_k^\dagger(u_k,\lambda;R)| \geq |y_k||a_{k,2}^\dagger(R,\lambda)|-|a_{k,1}^\dagger(R,\lambda)| \geq \frac1{2}|y_k|-1.
$$
Now, since $|\partial_k {\sf F}(x_1,x_2)| \leq f_k(x_k)$, we get
$$
|\nabla_\lambda g(u,\lambda;Z,R)| \leq |Z| \sum_{k=1}^2 |f_k(T_k^\dagger(u_k,\lambda;R))||\xi_k(u_k,R)|
$$
By Assumption~\ref{ass:epsdist}, $\ssup_y (1+|y|)f_k(y) < \infty$. So, for $1\leq k \leq 2$
$$f_k(T_k^\dagger(u_k,\lambda;R))|y_k|\leq 2f_k(T_k^\dagger(u_k,\lambda;R))(1+|T_k^\dagger(u_k,\lambda;R)|) \leq c < \infty.
$$
Hence, on $\Omega_n$, there is a constant $c < \infty$ such that
$$
|\nabla_\lambda g(u,\lambda;Z,R)| \leq c |Z| \mmax\limits_{1\leq i,j \leq 2}|\dot a_{i,j}(R)|.
$$
Similarly,
$$
|\nabla^2_\lambda g(u,\lambda;Z,R)| \leq  |Z| \sum_{i,j=1}^2| \partial_i\partial_j{\sf F}(T_1^\dagger(u_1,\lambda;R),T_1^\dagger(u_1,\lambda;R))||\xi_i(u_i,R)||\xi_j(u_j,R)|
$$
Using that, by Assumption~\ref{ass:marginal}, 
$$
| \partial_i\partial_j{\sf F}(x_1,x_2)| \leq \frac{c}{(1+|x_i|)(1+|x_j|)}
$$
it can be equivalently shown that for some constant $c \in (0, \infty)$:
$$
|\nabla^2_\lambda g(u,\lambda;Z,R)| \leq c |Z| \mmax\limits_{1\leq i,j \leq 2}|\dot a_{i,j}(R)|^2.
$$
The claim thus follows from Hölders inequality.

\section{Proof of Lemma 2}

Recall from Eq.~\eqref{eq:gn_tilde_rel} that we can recover the pair $(G_n, C_n)$ from the pair $(\tilde G_n,\tilde C_n)$ via the functional map:
$$
(G_n,C_n)(u,s;\lambda) = (\tilde G_n,\tilde C_n)(\tilde C_{1,n}^{-}(u_1,s;\lambda), \tilde C_{2,n}^{-}(u_2,s;\lambda),s; \lambda).
$$
The properties of $\tilde {\mathbb H}_n(u,s;\lambda_0+h/\sqrt{n})=\sqrt{n}s_n((\tilde G_n,\tilde C_n)-(\mu {\sf C},{\sf C}))(u,s;\lambda_0 + h/\sqrt{n})$ have been established under the local drift $h \in H \subset \mathbb{R}^d$. If we can show that the mapping underlying Eq.~\eqref{eq:gn_tilde_rel} obeys the extended continuous mapping theorem, then the weak limit of ${\mathbb H}_n(u,s;\hat\lambda)$ claim follows because $\tilde {\mathbb H}^\circ_n(u,s;\lambda_0)=\tilde {\mathbb H}_n(u,s;\lambda_0)$, and, by assumption, $(\tilde {\mathbb H}^\circ_n(u,s;\lambda_0),h_n) \Rightarrow (\mathbb H, \Lambda)$, where the random drift $h_n \coloneqq \sqrt{n}(\lambda_n-\lambda_0)$ satisfies, by Assumption~\ref{ass:marginal}, ${\mathbb P}(h_n \in H) \rightarrow 1$. It thus remains to establish  the extended continuous mapping theorem. In what follows, we focus on $\mathbb G_n$, as $\mathbb C_n$ is a special case.

Define $\Phi_n: {\cal A}_n \longrightarrow \ell^\infty(E;{\mathbb R}^k)$, $E = [0,1]^2 \times [0,1] \times H$, via $s_nG_n = \Phi_n(s_n \tilde G_n,s_n \tilde C_n).$ We will first show that $\Phi_n$ is sufficiently smooth at $s_n(\mu {\sf C},{\sf C})$. The proof follows closely the proof of Theorem 2.4 in \cite{bucher2013empirical} and the proof of Lemma B.1 in \citet[Appendix B]{bucher2016dependent}. The scaling by $s_n$ ensures uniformity over $s \in [0,1]$. To begin with, recall $(u,s,h) \in E$, and, $u = (u_1,u_2)$, $u^{(1)} = (u_1,1)$, $u^{(2)} = (1,u_2)$. Define the domain ${\cal A}_n \coloneqq \ell^\infty(E;\mathbb{R}^k) \times {\cal E}_n^\star$, 
\[
{\cal E}_n^\star \coloneqq \{K\in \ell^\infty(T): \forall h \in H,\ 
(u,s)\mapsto K(u,s,h)\in {\cal E}_n\},
\]
where ${\cal E}_n$ is the set of all bounded function $F: [0,1]^2 \times [0,1]\mapsto {\mathbb R}$, such that for all $s\in [0,1]$, $u \mapsto F(u,s)$ is a bivariate cdf on $[0,1]^2$ with total mass $s_n$, $F((1,1),s)=s_n$, margins $F_p(u_p,s) = F(u^{(p)},s)$ so that $F_p(0)=0$ and $F_p(1,s) = s_n$. We note that $(s_n \tilde G_n,s_n \tilde C_n) \in {\cal A}_n$ $a.s.$. Next, for any non-decreasing function $F(\cdot, s,h)$ with values in $[0,s_n]$ define the {\it scaled} generalized inverse
$$
{\sf in}_n(F)(u,s,h) \coloneqq \iinf\{x \in [0,1]: F(x,s,h) \geq s_nu\},\quad {\sf in}_n(F)(u,0,h) = 0.
$$
Finally, let $\Phi_n: {\cal A}_n \longrightarrow \ell^\infty(E;\mathbb{R}^k)$, where
\[
\Phi_n(G,K)(u,s,h) = G({\sf in}_n(K_1)(u_1,s,h),{\sf in}_n(K_2)(u_2,s,h), s,h),
\]
with the convention $\Phi_n(G,K)(u,0,h) \coloneqq 0$. In view of Eq.~\eqref{eq:gn_tilde_rel}, 
$$(G,K)(u,s,h) = s_n(\tilde G_n,\tilde C_n)(u,s;\lambda_0+h/\sqrt{n}),$$ so that for $p \in \{1,2\}$: ${\sf in}_n(K_p)(u_p,s,h) = \tilde C_{p,n}^{-}(u_p,s;\lambda_0+h/\sqrt{n})$ and hence $s_n G_n = \Phi_n(s_n \tilde G_n,s_n \tilde C_n)$. The map is evaluated at the {\it limit point} $(G_{0,n},K_{0,n})$, with $G_{0,n}(u,s,h)=s_n\mu{\sf C}(u)$ and $K_{0,n}(u,s,h)=s_n{\sf C}(u)$, independent of $h$, and $K_{0,n} \in {\cal E}_n^\star$ with $G_{0,n} = \Phi_n(G_{0,n},K_{0,n})$.

Now, define the tangential set ${\cal T}_0^\star \coloneqq {\cal E}_0^\star \times {\cal K}_0^\star \subset C(T;\mathbb{R}^k) \times C(T)$, with $(G_0,K_0) \in {\cal T}^\star$. ${\cal E}_0^\star$ consists of continuous function $g: E \rightarrow {\mathbb R}^k$ satisfying $$g(u,0,h)=g((u_1,0),s,h)=g((0,u_2),s,h)=0.$$ ${\cal K}_0^\star$ consists of continuous functions $\kappa: T \rightarrow {\mathbb R}$ such that $$\kappa(u,0,h)=\kappa((u_1,0),s,h)=\kappa((0,u_2),s,h)=\kappa((1,1),s,h)=0$$ and $\kappa(u^{p},s,h) = 0$ whenever $u_p \in \{0,1\}$.  Set for some $(g,\kappa) \in {\cal T}^\star$ 
$$
\Upsilon_n(g,\kappa) \coloneqq \sqrt{n}(\Phi_n(G_{0,n}+n^{-1/2}g,K_{0,n}+n^{-1/2}\kappa)-\Phi_n(G_{0,n},K_{0,n})).
$$
We are now ready to state the main result.\footnote{As a direct consequence of Lemma~\ref{lem:sec_had}, we obtain $\ssup\limits_{(u,s,h)\in T}|\Upsilon_n(\kappa_n,\kappa_n)-\Upsilon_C(\kappa)|\rightarrow 0$ for the special case $Z = 1$ ($\mu = 1$)
$$
\Upsilon_C(\kappa)(u,s,h) \coloneqq \kappa(u,s,h)- \sum_{p=1}^2 \partial_p{\sf C}(u)\kappa(u^{(p)},s,h). 
$$} 

\begin{lemma}\label{lem:sec_had} If $(g_n,\kappa_n)\rightarrow (g,\kappa) \in {\cal T}^\star$ uniformly in $E$, then $$\ssup\limits_{(u,s,h)\in E}|\Upsilon_n(g_n,\kappa_n)-\Upsilon(g,\kappa)|\rightarrow 0,$$ where
$$
\Upsilon(g,\kappa)(u,s,h) \coloneqq g(u,s,h)-\mu \sum_{p=1}^2 \partial_p{\sf C}(u)\kappa(u^{(p)},s,h). 
$$
\end{lemma}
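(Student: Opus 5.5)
We follow the proof of Theorem 2.4 in \cite{bucher2013empirical}, adapted to the marked process and to the scaling $s_n$. Write $\Phi_n(G,K)(u,s,h)=G(\mathsf{in}_n(K_1)(u_1,s,h),\mathsf{in}_n(K_2)(u_2,s,h),s,h)$ and set
\[
v_{p,n}(u_p,s,h)\coloneqq \mathsf{in}_n(K_{0,n,p}+n^{-1/2}\kappa_{n,p})(u_p,s,h),\qquad v_n\coloneqq(v_{1,n},v_{2,n}).
\]
Since $K_{0,n,p}(u_p,s)=s_nu_p$, the unperturbed inverse is the identity, so $\Phi_n(G_{0,n},K_{0,n})(u,s,h)=s_n\mu\mathsf{C}(u)$. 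We then decompose
\[
\Upsilon_n(g_n,\kappa_n)=g_n(v_n,s,h)+\sqrt n\, s_n\mu\{\mathsf{C}(v_n)-\mathsf{C}(u)\}\eqqcolon A_n+B_n,
\]
and treat the two terms separately. On $s=0$ both sides vanish by convention, so we may assume $s_n>0$.

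\emph{Step 1 (inverse map).} We show that, uniformly in $(u_p,s,h)$,
\[
\sqrt n\, s_n\,(v_{p,n}-u_p)=-\kappa_p(u_p,s,h)+o(1),\qquad \kappa_p(u_p,s,h)\coloneqq\kappa(u^{(p)},s,h).
\]
This is the Vervaat-type step and uses the boundary constraints in $\mathcal K_0^\star$. By definition of the generalized inverse, $s_nv_{p,n}+n^{-1/2}\kappa_{n,p}(v_{p,n})$ lies within $O(n^{-1})$ of $s_nu_p$. Rearranging gives $\sqrt n\, s_n(v_{p,n}-u_p)=-\kappa_{n,p}(v_{p,n})+o(1)$. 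It then suffices to show $\sup|v_{p,n}-u_p|\to0$, because uniform convergence of $\kappa_n$ together with uniform continuity of $\kappa$ on the compact $E$ gives $\kappa_{n,p}(v_{p,n})\to\kappa_p(u_p)$.

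The difficulty is uniformity in $s$ near $0$. There we do not divide by $s_n$: we only need $s_n|v_{p,n}-u_p|\to0$, and this follows from $\|\kappa_n\|_\infty/\sqrt n\to0$. We also use $\kappa(u,0,h)=0$ and continuity in $s$ to control small $s$ directly, via an $\varepsilon$-argument that splits $s\le\delta$ from $s>\delta$.

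\emph{Step 2 (term $A_n$).} Since $g_n\to g$ uniformly, $g$ is uniformly continuous, and $\sup_{s\ge\delta}|v_n-u|\to0$, we get $A_n\to g(u,s,h)$ on $s\ge\delta$. For $s<\delta$ we use $g(u,0,h)=0$ and continuity of $g$. Here the coordinates $v_{p,n}$ may deviate by $O(1)$ when $s_n\approx n^{-1/2}$, but $g$ is evaluated at $s$ near $0$ where it is uniformly small, so $A_n$ stays small as well.

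\emph{Step 3 (term $B_n$, main obstacle).} By the mean value theorem along the segment from $u$ to $v_n$,
\[
B_n=\mu\sum_{p=1}^2\partial_p\mathsf{C}(\tilde u_n)\,\sqrt n\, s_n(v_{p,n}-u_p)\longrightarrow-\mu\sum_{p=1}^2\partial_p\mathsf{C}(u)\,\kappa(u^{(p)},s,h).
\]
Passing to the limit requires care because Assumption~\ref{ass:Csmooth} only gives continuity of $\partial_p\mathsf{C}$ on $\{u_p\in(0,1)\}$. We handle this exactly as in \cite{segers2012asymptotics} and \cite{bucher2013empirical}. Partial derivatives of a copula are bounded in $[0,1]$. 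Near $u_p\in\{0,1\}$ we have $\kappa(u^{(p)},s,h)\to0$ by the tangential constraints, so the product is small irrespective of the value of $\partial_p\mathsf{C}$. Away from the boundary, continuity of $\partial_p\mathsf{C}$ applies. For $s$ near $0$ we again use the bound $|B_n|\le|\mu|\sum_p\sqrt n\, s_n|v_{p,n}-u_p|\to|\mu|\sum_p|\kappa_p|$, which is small there.

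Combining Steps 1--3 yields $\sup_E|\Upsilon_n(g_n,\kappa_n)-\Upsilon(g,\kappa)|\to0$. We expect the boundary/uniformity argument in Step 3, interacting with the scaling near $s=0$ in Step 1, to be the main obstacle.
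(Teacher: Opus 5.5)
Your proposal is correct and is essentially the paper's argument. The paper simply invokes the deterministic proof of \citet[Lem.~B.1]{bucher2016dependent} (modelled on \citealp{bucher2013empirical}), noting that the extra index $h$ and the vector-valued $G$ are absorbed by uniform bounds and a componentwise treatment; your Vervaat-type inverse step, the $A_n+B_n$ split with a separate $s\le\delta$ regime, and the Segers-type boundary argument for $\partial_p{\sf C}$ are exactly the content of that lemma written out.

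One small correction in Step~1: for a general perturbation in the domain, the jumps of $s_n u_p+n^{-1/2}\kappa_{n,p}$ are only $o(n^{-1/2})$, not $O(n^{-1})$ as for an empirical cdf, because they are bounded by $2n^{-1/2}\|\kappa_n-\kappa\|_\infty$ with $\kappa$ continuous. This still gives $\sqrt n\, s_n(v_{p,n}-u_p)=-\kappa_{n,p}(v_{p,n})+o(1)$ uniformly, so your conclusion stands.
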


Lemma~\ref{lem:sec_delt} is a consequence of the extended continuous mapping theorem (see, e.g., \citealp{van1994weak}), Lemma~\ref{lem:sec_had}, and Lemma~\ref{lem:seqtilde}. In particular, set $g_n(u,s,h) = \tilde {\mathbb G}_n(u,s,\lambda_0+h/\sqrt{n})$, $\kappa_n(u,s,h) = \tilde {\mathbb C}_n(u,s,\lambda_0+h/\sqrt{n})$. Then $g(u,s,h) = \mu {\mathbb B}_C(u,s)+ {\mathbb B}_Z(u,s)+s\Psi(u)h \in {\cal E}^\star$, $\kappa(u,s,h) =  {\mathbb B}_C(u,s)+s\psi_C(u)h \in {\cal K}^\star$. Therefore, by Lemma~\ref{lem:sec_delt}, we get 
\begin{align*}
   \mathbb{G}_n(\cdot,\cdot;\lambda_0+\cdot/\sqrt{n}) =  \Upsilon_n(g_n,\kappa_n)\Rightarrow \Upsilon(g,\kappa),
\end{align*}
in $\ell^\infty(E;{\mathbb R}^k)$. The claim then follows from the extended continuous mapping theorem because $(\mathbb G_n(\cdot,\cdot;\lambda_0), \sqrt{n}(\hat\lambda-\lambda_0))\Rightarrow ({\mathbb G},\Lambda)$.

It thus remains to prove Lemma~\ref{lem:sec_had}.

{\it Proof of Lemma~\ref{lem:sec_had}.} The proof follows almost verbatim \citet[Lem B.1]{bucher2016dependent}. There are two extensions: 1) We enlarge the index set from $(u,s)$ to $(u,s,h)\in[0,1]^2\times[0,1]\times H$, where $H\subset\mathbb{R}^{d}$ is compact, in order to accommodate local deviations $\lambda=\lambda_0+h/\sqrt{n}$ and thus control the first-step estimation error. This does not create new difficulties: the argument in \citet[Lem.~B.1]{bucher2016dependent} is purely deterministic and based on uniform bounds. Hence it can be applied pointwise in $h$ and remains valid after taking an additional supremum over $h\in H$. 2) We work with the two-input composition map $(G,K)\mapsto \Phi_n(G,K)$ and apply it to
$(s_n\tilde G_n,\,s_n\tilde C_n)$. The marked case is handled componentwise since $G$ takes
values in $\mathbb{R}^k$. The only delicate step concerns the control of the inverse terms
${\sf in}_n(K_p)$ (equivalently, the generalized inverses of the marginal sections of
$\tilde C_n$), which is exactly the issue treated in \citet[Lem.~B.1]{bucher2016dependent}.
No further complications arise beyond tracking the additional indices.

\section{Proof of Proposition 1}
\subsection{Proof of Proposition 1 (1)}
To begin with, define
 $$
 \tilde\ell(u,\alpha) \coloneqq \frac1{n}\sum_{t=1}^n \ell_t(u,\alpha).
 $$
 We will show that, uniformly in $(u,\alpha) \in {\cal U}\times {\cal A}$, ($1$) $|\hat\ell(u,\alpha)-\tilde \ell(u,\alpha)|=o_p(1)$ and ($2$) $|\tilde \ell(u,\alpha)-\ell(u,\alpha)|=o_p(1)$. Now, if (1) and (2) are true, the claim follows from the uniform separation result in Eq.~\eqref{eq:separation}. Note that
 \begin{align*}
 \ssup\limits_{(u,\alpha) \in {\cal U} \times {\cal A}} |\hat\ell(u,\alpha)-\tilde \ell(u,\alpha)| \leq \,& \mmax_{1\leq j \leq 4} \ssup\limits_{(u,\alpha) \in {\cal U} \times {\cal A}} | \llog p_j(u,\alpha)|\\
 \,& \quad \ssup\limits_{u \in {\cal U}}  \sum_{j=1}^4\big|\frac1{n}\sum_{t=1}^n \hat d_{j,t}(u) -\frac1{n}\sum_{t=1}^n d_{j,t}(u)\big|,
  \end{align*}
  where we recall that $\hat d_{j,t}(u)\coloneqq \hat d_{j,t}(u,1)$ and $d_{j,t}(u)$ are the quadrant indicators based on the true ranks. Because $\iinf_{u,a} p_j(u,\alpha)  > 0$, we get $\ssup_{u,\alpha} |\llog p_j(u,\alpha)| < \infty$. Moreover, there exists a $c \in (0,\infty)$ such that, by Lemma~\ref{lem:seqtilde} and \ref{lem:sec_delt}, 
\begin{align} \label{eq:dhatdiff}
    \ssup\limits_{u \in {\cal U}}  |\frac1{n}\sum_{t=1}^n \hat d_{j,t}(u) -\,&\frac1{n}\sum_{t=1}^n d_{j,t}(u)\big| \nonumber \\ \leq \,&\frac{c}{\sqrt n} \ssup\limits_{u \in {\cal U}} \sqrt{n}|C_n(u,1;\hat\lambda)-\tilde C_n(u,1;\lambda_0)| = O_p(n^{-1/2}).
\end{align}
Finally, we apply  \citet[Lemma 2.4]{newmc:94} to show (2). This follows because: ($i$) $\ell_t(u,\alpha)$ is a sequence of {\sf IID} random variables, ($ii$) $(u,\alpha) \mapsto \ell_t(u,\alpha)$ is continuous, ($iii$) $\ssup_{u \in {\cal U}}\ssup_{\alpha \in {\cal A}}|\ell_t(u,\alpha)| \leq c$,  ($iv$) and ${\cal U} \times {\cal A}$ is compact. This completes the proof. \hfill $\square$

\subsection{Proof of Proposition 1 (2)}

By definition, $\hat\alpha(u)$ is an interior maximizer such that $\partial_\alpha\hat\ell(u,\hat\alpha(u))=0$ for each given $u \in {\cal U}$. Hence, there exists some $\bar \alpha(u) \in {\cal A}$, $u \in {\cal U}$, on the line segment connecting $\alpha_0(u)$ and $\hat\alpha(u)$ such that
\[
\sqrt{n}(\hat\alpha-\alpha_0)(u) = -[\partial^2_{\alpha\alpha}\hat\ell(u,\bar\alpha)]^{-1}\sqrt{n}\partial_\alpha \hat\ell(u,\alpha_0).
\]
By Lemma~\ref{lem:sec_delt}, $\sqrt{n}\partial_\alpha \hat\ell(u,\alpha_0) \Rightarrow \tau(u)\varphi_u(\mathbb{C}(\cdot,1))$ in $\ell^\infty({\cal U})$.   Recall from Eqs.~\eqref{eq:restricted} and \eqref{eq:ellinf} the feasible and infeasible likelihood contributions, respectively. We will show that uniformly in $u \in {\cal U}$
\[
\partial^2_{\alpha\alpha}\hat\ell(u,\bar\alpha) \rightarrow_p \Ex[\partial^2_{\alpha\alpha}\ell_t(u,\alpha_0)] = -\tau^2(u)\sum_{j=1}^4\frac1{p_j(u)}.
\]
We will do so in two steps: First, we show
\begin{align}\label{eq:alphaproofdiff}
    \ssup\limits_{(u,\alpha) \in {\cal U}  \times {\cal A}}\Big|\frac1{n}\sum_{t=1}^n (\partial_{\alpha\alpha}^2\hat\ell_{t}(u,\alpha)- \partial_{\alpha\alpha}^2\ell_{t}(u,\alpha))\Big| = o_p(1)
\end{align}
to then show
\begin{align}\label{eq:alphaulln}
      \ssup\limits_{(u,\alpha) \in {\cal U}  \times {\cal A}}\Big|\frac1{n}\sum_{t=1}^n (\partial_{\alpha\alpha}^2\ell_{t}(u,\alpha)- \Ex[\partial_{\alpha\alpha}^2\ell_{t}(u,\alpha)])\Big| = o_p(1).
    \end{align}
The claim then follows from the Proposition~\ref{prop:alpha} in conjunction with the continuity of $(u,\alpha) \mapsto \Ex[\partial_{\alpha\alpha}^2\ell_{t}(u,\alpha)]$ on the compact set ${\cal U}  \times {\cal A}$. Now, $\partial_{\alpha\alpha}^2\hat \ell_{t}(u,\alpha) = \sum_{j=1}^4\hat d_{j,t}(u)^\top\partial_{\alpha\alpha}^2\log p_j(u,\alpha)$ and $\partial_{\alpha\alpha}^2 \ell_{t}(u,\alpha) = \sum_{j=1}^4d_{j,t}(u)^\top\partial_{\alpha\alpha}^2\log p_j(u,\alpha)$. Applying the chain-rule and using $\varrho(u,\alpha) = {\sf tanh}(\alpha(u))$ we get
\begin{align*}
\partial_{\alpha\alpha}^2 \llog p_j(u,\alpha)
= \,&
(1-\varrho(u,\alpha)^2)^2
\bigg[
\frac{e_j\, C_{\varrho\varrho}(u,\alpha)}{p_j(u,\alpha)}\\
\,& -
\frac{C_{\varrho}(u,\alpha)^2}{p_j(u,\alpha)^2}
\bigg]
-
2\varrho(u,\alpha)(1-\varrho(u,\alpha)^2)\,
\frac{e_j\, C_{\varrho}(u,\alpha)}{p_j(u,\alpha)},
\end{align*}
where  $e \coloneqq (1,-1,-1,1)^\top$ and
$$
C_{\varrho}(u,\alpha) = \phi(\Phi^{-1}(u_1),\Phi^{-1}(u_2);\varrho(u,\alpha)) 
$$
and, with $z_i \coloneqq  \Phi^{-1}(u_i)$
$$
C_{\varrho\varrho}(u,\alpha) = C_{\varrho}(u,\alpha)\frac{z_1z_2+(1-z_1^2-z_2^2)\varrho(u,\alpha)+z_1z_2\varrho^2(u,\alpha)-\varrho^3(u,\alpha)}{(1-\varrho^2(u,\alpha))^2}.
$$
Because $\Ex[d_{j,t}(u)] = p_j(u,\alpha_0)=p_j(u)$, and, $\varrho(u,\alpha) = \rho(u)$ at $\alpha = \alpha_0$ it follows that $\Ex[\partial^2_{\alpha\alpha}\ell_t(u,\alpha_0)] = -\tau^2(u)\sum_{j=1}^4\frac1{p_j(u)}$. Now, it will be shown that there exists a constant $c \in (0,\infty)$ such that
\begin{align}\label{eq:supprob}
\mmax\limits_{1\leq j \leq 4} \ssup\limits_{u \in {\cal U}}\ssup\limits_{\alpha \in {\cal A}}\partial_{\alpha\alpha}^2\llog p_j(u,\alpha) \leq  c.  
\end{align}
To see this, note that there is a constant $c$ such that  $\mmax\limits_{1\leq j \leq 4}\iinf\limits_{u \in {\cal U}}\iinf\limits_{\alpha \in {\cal A}}p_j(u,\alpha) \geq c > 0$. Moreover, for each $j$, $\partial_\varrho p_j(u,\alpha) = e_j C_\varrho(u,\alpha)$ and $\partial^2_{\varrho\varrho} p_j(u,\alpha) = e_j C_{\varrho\varrho}(u,\alpha)$, while $\partial_\alpha \varrho(u,\alpha)= 1-\varrho^2(u,\alpha)$ and $\partial^2_{\alpha\alpha} \varrho(u,\alpha)= -2\varrho(u,\alpha)(1-\varrho^2(u,\alpha))$.  Moreover,
$$
\mmax\{\ssup\limits_{u \in {\cal U}}\ssup\limits_{\alpha \in {\cal A}}|C_{\varrho}(u,\alpha)|,\ssup\limits_{u \in {\cal U}}\ssup\limits_{\alpha \in {\cal A}}|C_{\varrho\varrho}(u,\alpha)|\}\leq c,
$$
which shows Eq.~\eqref{eq:supprob}.  By Eq.~\eqref{eq:dhatdiff}, $\ssup_{u}|\frac1{n}\sum_{t=1}^n \hat d_{j,t}(u)-\frac1{n}\sum_{t=1}^n d_{j,t}(u)| = O_p(n^{-1/2}).$ Combining the preceding display with Eq.~\eqref{eq:supprob} yields \eqref{eq:alphaproofdiff}. Eq.~\eqref{eq:alphaulln} follows uppon applying  \citet[Lemma 2.4]{newmc:94} because: ($i$) $\partial_{\alpha\alpha}^2\ell_t(u,\alpha)$ is a sequence of {\sf IID} random variables, ($ii$) $(u,\alpha) \mapsto \partial_{\alpha\alpha}^2\ell_t(u,\alpha)$ is continuous, ($iii$) $\ssup_{u \in {\cal U}}\ssup_{\alpha \in {\cal A}}|\partial_{\alpha\alpha}^2\ell_t(u,\alpha)| \leq c$,  ($iv$) and ${\cal U} \times {\cal A}$ is compact. This completes the proof. \hfill $\square$

\section{Proof of Proposition 2}

By the mean-value theorem, for each $(u,s) \in {\cal U} \times [0,1]$, there exists $\bar\alpha(u,s)$ (random) on the line segment between $\hat\alpha(u)$ and $\alpha_0(u)$ such that:
\begin{align*}
\sqrt{n}\nabla_\beta \hat\ell(u,s,\hat\theta) = \sqrt{n}\nabla_\beta \hat\ell(u,s,\theta_0)  + \sqrt{n}(\hat\alpha-\alpha_0)(u)  \hat J(u,s,\bar\alpha(u,s)),
\end{align*}
where  $\hat J(u,s,\alpha) \coloneqq \nabla^2_{\beta\alpha} \hat\ell(u,s,(\alpha,0)) = \partial_\alpha \nabla_\beta   \hat\ell(u,s,(\alpha,0))$, 
\begin{align*}
\hat J(u,s,\alpha)  = \tau(u,\alpha)\,\frac1{n}\sum_{t=1}^{\floor{sn}}  Z_t b(u,\alpha)^\top \hat d_t(u,s),
\end{align*}
recalling $\tau(u,\alpha) = C_\varrho(u,\alpha)(1-\varrho^2(u,\alpha))$ and $b(u,\alpha) \coloneqq (b_1(u,\alpha),\dots,b_4(u,\alpha))^\top$,
\[
b_j(u,\alpha) \coloneqq k(u,\alpha) \frac{e_j}{p_j(u,\alpha)}-\tau(u,\alpha) \frac{1}{p_j^2(u,\alpha)}, \quad j \in \{1,\dots,4\},
\]
with $e = (1,-1,-1,1)^\top$, $k(u,\alpha)\coloneqq  (1-\varrho^2(u,\alpha))\frac{C_{\varrho\varrho}(u,\alpha)}{C_{\varrho}(u,\alpha)}-2\varrho(u,\alpha).$ 
We first show,
\begin{align}
   \ssup\limits_{(u,s,\alpha) \in {\cal U} \times [0,1] \times {\cal A}} \|\frac1{n}\sum_{t=1}^{\floor{sn}}  Z_tb(u,\alpha)^\top (\hat d_{t}(u,s)-d_{t}(u))\| = o_p(1),
\end{align}
where we recall that $d_t(u)$ is based on the the true ranks.
To see this, note that there are $c_0,c_1 \in (0,\infty)$ such that $\mmax_j\ssup_{(u,\alpha) \in {\cal U}\times {\cal A}} |b_j(u,\alpha)| \leq c_0$ so that, by Lemma~\ref{lem:seqtilde}, 
\begin{align*}
   \ssup\limits_{(u,s,\alpha) \in {\cal U} \times [0,1] \times {\cal A}} \|\frac1{n}\sum_{t=1}^{\floor{sn}} & Z_tb(u,\alpha)^\top (\hat d_{t}(u,s)-d_{t}(u))\| \\
\,& \leq c_1 \ssup\limits_{(u,s) \in {\cal U} \times [0,1]}\|G_n(u,s,\hat\lambda)-\tilde G_n(u,s,\lambda_0)\| = o_p(1).
\end{align*}
Thus, uniformly in $(u,s,\alpha)\in {\cal U} \times [0,1] \times {\cal A}$:
$$
\hat J(u,s,\alpha) =  \tau(u,\alpha)\,\frac1{n}\sum_{t=1}^{\floor{sn}}  Z_t  b(u,\alpha)^\top d_{t}(u) + o_p(1).
$$
Because $\Ex[d_{j,t}(u)] = p_j(u,\alpha_0) = p_j(u)$ and $\sum_j e_j = 0$, $\Ex[w_{1,t}(u,\alpha)]=0$ at $\alpha = \alpha_0$, we get
\[
\Ex[Z_0b(u,\alpha_0)^\top d_{0}(u)] = \mu b(u,\alpha_0)^\top p(u) = -\mu \tau(u)\sum_{j=1}^4\frac{1}{p_j(u)},
\]
with $p(u) = (p_1(u),\dots,p_4(u))^\top$. Next, because $b(u,\alpha)$ are uniformly bounded in $(u,\alpha)$, there exists $c \in (0,\infty)$ such that
\begin{align*}
\ssup\limits_{(u,s,\alpha)\in {\cal U} \times [0,1] \times {\cal A}}\| \frac1{n}\sum_{t=1}^{\floor{sn}}&Z_tb(u,\alpha)^\top(d_t(u)-p(u))\| \\
& \leq \frac{c}{\sqrt n}\ssup\limits_{(u,s) \in [0,1]^2 \times [0,1]} |\tilde {\mathbb G}_n(u,s;\lambda_0)| = O_p(n^{-1/2}),
\end{align*}
where the order of magnitude follows again from Lemma~\ref{lem:seqtilde}.  Thus, uniformly in $(u,s,\alpha)\in {\cal U} \times [0,1] \times {\cal A}$:
\[
\hat J(u,s,\alpha) = s\mu\tau(u,\alpha)b(u,\alpha)^\top p(u)+o_p(1).
\]
Thus, because $(u,\alpha) \mapsto \tau(u,\alpha)b(u,\alpha)^\top p(u)$ is continuous on the compact set ${\cal U} \times {\cal A}$, Proposition~\ref{prop:alpha} yields
\[
\hat J(u,s,\bar\alpha(u,s)) = s\mu\tau(u)b(u,\alpha_0)^\top p(u)+o_p(1).
\]
Because, by Proposition~\ref{prop:alpha}, $\sqrt{n}(\hat\alpha-\alpha_0)(\cdot) \Rightarrow {\mathbb C}(\cdot,1)/\tau(\cdot)$ in in $\ell^\infty({\cal U})$, Slutzky's theorem and Lemma~\ref{lem:sec_delt} yield in $\ell^\infty({\cal U} \times [0,1]; \mathbb R^k)$:
\[
\sqrt{n}\nabla_\beta \hat\ell(u,s,\hat\theta) \Rightarrow \tau(u)\varphi_u(\mathbb{G}(\cdot,s))+s \mu  {\mathbb C}(u,1)   b(u,\alpha_0)^\top p(u),
\]
with  ${\mathbb C}(u,1)   b(u,\alpha_0)^\top p(u) = -\tau(u)\varphi_u(\mathbb{C}(\cdot,1))$. The claim follows. \hfill $\square$
\end{appendices}
\end{document}